\documentclass[aps,pra,reprint,amsmath,amssymb,superscriptaddress,nofootinbib]{revtex4-2}
\usepackage{amsmath, amssymb, mathrsfs, braket}
\usepackage{hyperref}
\usepackage{amsthm}
\usepackage{xcolor}
\usepackage{mathtools}
\usepackage{yquant}
\usepackage{tikz-cd}

\newtheorem{theorem}{Theorem}
\newtheorem{lemma}[theorem]{Lemma}
\newtheorem{proposition}[theorem]{Proposition}

\DeclareMathOperator{\sym}{Sym}
\DeclareMathOperator{\cnot}{CNOT}
\DeclareMathOperator{\swapop}{SWAP}
\DeclareMathOperator{\tr}{Tr}
\DeclareMathOperator{\rank}{rank}

\DeclareMathOperator{\imag}{Im}
\DeclareMathOperator{\qft}{QFT}

\begin{document}

\title{Symmetry-Based Quantum Codes Beyond the Pauli Group}

\author{Zachary P. Bradshaw}
\thanks{Corresponding Author: zak@qodexquantum.ai}
\affiliation{QodeX Quantum, Chicago, IL}
\author{Margarite L. LaBorde}
\affiliation{QodeX Quantum, Chicago, IL}
\author{Dillon Montero}
\affiliation{Tulane University, Department of Mathematics, New Orleans, LA}
\date{\today}

\begin{abstract}
Typical stabilizer codes aim to solve the general problem of fault-tolerance without regard for the structure of a specific system. By incorporating a broader representation-theoretic perspective, we provide a generalized framework that allows the code designer to take this structure into account. For any representation of a finite group, we produce a quantum code with a code space invariant under the group action, providing passive error mitigation against errors belonging to the image of the representation. Furthermore, errors outside this scope are detected and diagnosed by performing a projective measurement onto the isotypic components corresponding to irreducible representations of the chosen group, effectively generalizing syndrome extraction to symmetry-resolved quantum measurements. We show that all stabilizer codes are a special case of this construction, including qudit stabilizer codes, and show that there is a natural one logical qubit code associated to the dihedral group. Thus we provide a unifying framework for existing codes while simultaneously facilitating symmetry-aware codes tailored to specific systems.
\end{abstract}

\maketitle

\section{Introduction}

Quantum error correction (QEC) \cite{roffe2019,bradshaw2025,devitt2013,knill2000,steane1996,kribs2005,zanardi1997,girvin2023} plays a pivotal role in the realization of fault-tolerant quantum computing \cite{gottesman2009,shor1996,preskill1997,bacon2000}. Indeed, quantum information \cite{nielsen2000,watrous2018,petz2007} is notoriously fragile. To store and manipulate it reliably, the system must be encoded so that errors can be detected and corrected without disturbing the underlying logical state. QEC addresses this challenge by embedding qubits into larger Hilbert spaces so that information can be spread out, creating a form of redundancy without violating the no-cloning theorem \cite{calderbank1996,shor1995}. In some versions of QEC, the subsystem in which the logical state lives is invariant under the action of a significant portion of the noise channel, providing a passive form of protection. Other variants include an active approach that employs measurements to diagnose errors while preserving the coherence of the logical information. A recovery operation is then applied in an attempt to restore the corrupted state to its original form \cite{edmonds1965,fowler2009,fowler2014,poulin2008,duclos-cianci2010,davaasuren2020,gicev2023,baireuther2018,varsamopoulos2017,varsamopoulos2020,sweke2020,matekole2022,Domingo-Colomer2020}.

The most well-known family of codes are stabilizer codes~\cite{gottesman1997,bradshaw2025}, which define a subspace stabilized by an abelian subgroup of the Pauli group. When an error occurs, a measurement of the generators of this subgroup can be performed such that the outcomes provide information about the type of error. The success of this model has led to many fault-tolerant quantum computing architectures using special stabilizer codes such as surface codes~\cite{dennis2002,kitaev2003,fowler2012}, color codes~\cite{bombin2006,kubica2015,fowler2011}, and low-density parity-check codes \cite{breuckmann2021,panteleev2021a,babar2015,dinur2023,panteleev2021b}. Despite their effectiveness, stabilizer codes are intrinsically tied to abelian symmetry. Each stabilizer corresponds to a symmetry generator, and all such generators commute. It is natural to ask whether this framework can be generalized to incorporate nonabelian group symmetries, leading to a richer theory that leverages representation theory \cite{fulton2013,steinberg2011,sengupta2011}. Subsystem codes \cite{aly2006} take a step in this direction by introducing a nonabelian gauge group whose center is the stabilizer subgroup. The code space defined by this stabilizer splits into a tensor product of a subsystem, where the gauge group acts trivially, and a co-subsystem, where the gauge group acts nontrivially. Logical states are then encoded in the subsystem, which ensures that the state is inherently protected from errors in the gauge group. When the co-subsystem is trivial, the subsystem becomes a decoherence-free subspace \cite{lidar1998}.

In this paper, we propose an alternative perspective. Our starting point is the observation that a finite group $G$ acting on a Hilbert space $\mathcal{H}$ via a unitary representation $W$ partitions the space into invariant components associated with irreducible representations of $G$ \cite{steinberg2011,sengupta2011}. The code space is taken to be the $G$-symmetric subspace consisting of all $\ket{\psi}\in\mathcal{H}$ that are left invariant by the action of $G$.  Errors that preserve the group symmetry leave the state in this subspace, while symmetry-breaking errors push the state into one or more isotypic subspaces of a non-trivial component irrep. By performing a modification of a $G$-Bose symmetry test \cite{laBorde2023,laborde2023thesis,bradshaw2023} with respect to the irreducible components of $W$, we can diagnose the nature of the error. This perspective therefore generalizes syndrome extraction to the broader context of symmetry-resolved quantum measurements, similar to approaches taken to quantify noise in symmetric quantum dynamics and error-correcting codes~\cite{watkins2025}.

Our code can be understood as a decoherence-free subspace in which a potentially non-abelian group acts trivially, with the caveat that there is also a well-defined syndrome extraction procedure. The decoherence-free subspace provides passive protection against errors given by a linear combination of group elements, and the isotypic syndrome extraction procedure provides active protection against errors outside the span of the group algebra.

The usual stabilizer codes fit neatly into this perspective. Indeed, given a stabilizer subgroup, the code space is taken to be the corresponding symmetric subspace \cite{harrow2013}. Since this subgroup of the Pauli group is taken to be abelian and each element squares to the identity, its irreducible representations are of the form $\rho_j(k)=(-1)^{j\cdot k}$, where $j,k\in\mathbb{Z}_2^{\oplus n}$, and the usual syndrome extraction procedure produces the label corresponding to the isotypic subspace of each irrep. This construction generalizes easily to the case of qudit stabilizer codes \cite{rains1999,ashikhmin2001,grassl2003,gheorghiu2014}.

Several symmetry-based codes have appeared in specialized settings \cite{gross2021, denys2024, wang2025, kubischta2023, kubischta2024}. Our broader perspective highlights that such codes are not isolated phenomena but rather a special case of a unifying representation-theoretic framework for quantum error correction. It also suggests that nonabelian symmetries can be incorporated in a natural way that differs from the subsystem codes. Thus, we provide a mathematically precise formalism for the creation of quantum codes that go beyond the popular stabilizer formalism, and as an example, we give a natural single qubit code associated to the dihedral group. While preparing this manuscript we became aware of an independent work that unifies several distinct codes \cite{kubischta2025}, showing that for any choice of representation, there is a single intrinsic code which determines the error correcting properties of any of its physical realizations.

The remainder of this work is organized as follows. In Section~\ref{sec:sym-codes}, we introduce our representation theoretic perspective. We define the code space to be the symmetric subspace of a chosen group, show that syndrome extraction generalizes to this higher level perspective, comment on the form of logical operations, and provide a loose rate estimate in the most general setting. We also provide a mechanism for implementing two qubit gates between independent single qubit codes using the CNOT gate as an example. In Section~\ref{sec:examples}, we produce several code examples that showcase the power and generality of our construction. We show that all qudit stabilizer codes are $\mathbb{Z}_d^{\oplus n}$ codes with a natural choice of representation and give several examples which cannot be viewed as stabilizer codes in the usual sense. Among these examples is a single logical qubit code given by the basis permutation representation of the dihedral group. In Section~\ref{sec:qft}, we elaborate on this example, providing circuits for the dihedral group operations. In Section~\ref{sec:conclusion}, we give concluding remarks.

\section{Symmetry-Based Quantum Codes}
\label{sec:sym-codes}

Let $W:G\to\mathcal{U}(\mathcal                                                                                                                                                                                                                                 {H})$ be a unitary representation of some group $G$, which is not necessarily abelian or a subgroup of the Pauli group. The $G$-symmetric subspace
\begin{equation}
    \mathrm{Sym}_G := \{ \ket{\psi} \in \mathcal{H} : W(g)\ket{\psi} = \ket{\psi} \ \forall g \in G \}
\end{equation}
is the space of all states which are invariant under the action of $G$. The projection onto this subspace is an average over the elements of the group \cite{laBorde2023,harrow2013}.
\begin{lemma}\label{lemma:symg-projection}
    The projection $\Pi_G$ onto $\sym_G$ is given by
    \begin{equation}\label{eq:symg-projection}
        \Pi_G=\frac{1}{|G|}\sum_{g\in G}W(g).
    \end{equation}
\end{lemma}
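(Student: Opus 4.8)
The plan is to verify that the claimed operator $\Pi_G := \frac{1}{|G|}\sum_{g\in G}W(g)$ is indeed the orthogonal projection onto $\sym_G$ by checking the two defining properties of such a projection: it must be idempotent (and Hermitian, so that it is an orthogonal projection), and its image must coincide exactly with $\sym_G$. Since $W$ is a unitary representation of a finite group, every $W(g)$ is unitary, so $W(g)^\dagger = W(g)^{-1} = W(g^{-1})$, which will be the main tool throughout.

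First I would establish idempotency. Computing $\Pi_G^2 = \frac{1}{|G|^2}\sum_{g,h\in G}W(g)W(h) = \frac{1}{|G|^2}\sum_{g,h}W(gh)$, I would use the standard rearrangement trick: for each fixed $g$, as $h$ ranges over $G$ the product $gh$ ranges over all of $G$ exactly once (left multiplication by $g$ is a bijection on $G$). Hence the inner sum over $h$ equals $\sum_{k\in G}W(k) = |G|\,\Pi_G$, and summing over the $|G|$ choices of $g$ gives $\Pi_G^2 = \frac{1}{|G|^2}\cdot|G|\cdot|G|\,\Pi_G = \Pi_G$. Hermiticity follows similarly: $\Pi_G^\dagger = \frac{1}{|G|}\sum_g W(g)^\dagger = \frac{1}{|G|}\sum_g W(g^{-1})$, and since $g\mapsto g^{-1}$ is a bijection of $G$, this sum is again $\Pi_G$.

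Next I would identify the image. For the inclusion $\mathrm{im}\,\Pi_G \subseteq \sym_G$, I would show $\Pi_G$ is $G$-invariant from the left: for any $g_0\in G$, $W(g_0)\Pi_G = \frac{1}{|G|}\sum_g W(g_0 g) = \frac{1}{|G|}\sum_k W(k) = \Pi_G$, again by the bijection $g\mapsto g_0 g$. Thus for any $\ket{\phi}$, the vector $\Pi_G\ket{\phi}$ satisfies $W(g_0)\Pi_G\ket{\phi} = \Pi_G\ket{\phi}$ for all $g_0$, so it lies in $\sym_G$. For the reverse inclusion $\sym_G\subseteq\mathrm{im}\,\Pi_G$, I would take any $\ket{\psi}\in\sym_G$ and compute $\Pi_G\ket{\psi} = \frac{1}{|G|}\sum_g W(g)\ket{\psi} = \frac{1}{|G|}\sum_g\ket{\psi} = \ket{\psi}$, using the defining invariance $W(g)\ket{\psi}=\ket{\psi}$; hence every element of $\sym_G$ is fixed by $\Pi_G$ and therefore lies in its image. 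Combined with idempotency and Hermiticity, this shows $\Pi_G$ is exactly the orthogonal projection onto $\sym_G$.

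I do not anticipate a genuine obstacle here, as this is a classical averaging (Reynolds operator) argument; the only points requiring care are ensuring each reindexing is a genuine bijection of the finite group $G$ (left multiplication and inversion both are) and noting that finiteness of $G$ is what makes the normalized sum well-defined. The argument uses nothing about $G$ being abelian or a subgroup of the Pauli group, which is consistent with the paper's broader representation-theoretic aim.
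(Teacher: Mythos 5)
Your proposal is correct and follows essentially the same route as the paper's proof: idempotency via the group rearrangement trick, then the two-way inclusion showing the image of the averaging operator is exactly $\sym_G$. The one addition is your explicit check of Hermiticity (via $g \mapsto g^{-1}$), which the paper omits but which is in fact needed to justify calling $\Pi_G$ an \emph{orthogonal} projection, as the paper later does; this is a small but genuine improvement in completeness.
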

\begin{proof}
    Observe that
    \begin{align}
        \Pi_G^2 &= \frac{1}{|G|^2}\sum_{g,h\in G}W(g)W(h)\\
        &=\frac{1}{|G|^2}\sum_{g,h\in G}W(gh)\\
        &=\frac{1}{|G|}\sum_{k\in G}W(k)=\Pi_G,
    \end{align}
    establishing that $\Pi_G$ is indeed a projection. To see that this projection is onto $\sym_G$, let $\ket{\psi}\in\imag(\Pi_G)$. Then there is a $\ket{\phi}$ such that $\ket{\psi}=\Pi_G\ket{\phi}$. It follows that
    \begin{align}
        W(h)\ket{\psi}&=W(h)\Pi_G\ket{\phi}\\
        &=\frac{1}{|G|}\sum_{g\in G}W(h)W(g)\ket{\phi}\\
        &=\frac{1}{|G|}\sum_{g\in G}W(g)\ket{\phi}=\ket{\psi},
    \end{align}
    for all $h\in G$. Thus, $\ket{\psi}\in\sym_G$. Conversely, if $\ket{\psi}\in\sym_G$, then $\Pi_G\ket{\psi}=\ket{\psi}$, and so it follows that the image of $\Pi_G$ is precisely $\sym_G$.
\end{proof}

We assume that $G$ is a finite group so that $W$ is completely reducible. Thus, $W$ decomposes as
\begin{equation}
    W = \bigoplus_{\lambda\in\widehat{G}}\rho_\lambda^{\oplus m_\lambda},
\end{equation}
where $\Hat{G}$ labels the equivalence classes of all irreducible representations of $G$ and $m_\lambda$ is the multiplicity of the irreducible representation $\rho_\lambda$. On the Hilbert space level, we have the decomposition
\begin{equation}\label{eq:hilbert-decomp}
    \mathcal{H}=\bigoplus_{\lambda\in\widehat{G}}\mathcal{H}_\lambda^{\oplus m_\lambda},
\end{equation}
where each $\mathcal{H}_\lambda$ is the invariant subspace corresponding to $\rho_\lambda$ and each $\mathcal{H}_\lambda^{\oplus m_\lambda}$ is referred to as the isotypic component associated to the irrep $\rho_\lambda$. Note that the isotypic component corresponding to the trivial representation is exactly the $G$-symmetric subspace. Indeed, this isotypic component is by definition the subspace of $\mathcal{H}$ for which $W$ acts as the trivial representation $W(g)=1$ for all $g\in G$.

\subsection{The Code Space}
The code space induced by the choice of $G$ is $\sym_G$, allowing us to detect any errors that rotate the logical state outside of $\sym_G$ by measuring the orthogonal projector \eqref{eq:symg-projection} onto this subspace. This procedure can be accomplished by performing a $G$-Bose symmetry test as in \cite{harrow2013,laBorde2023}. The reader is referred to \cite{bradshaw2023} for circuits that measure the symmetric and cyclic group projectors. Note that, just as in the usual stabilizer codes, the inclusion of $-I$ in the image of $G$ is unwise since it forces a trivial code space. This observation is the reason that one cannot study a nonabelian stabilizer subgroup when its elements are restricted to the Pauli group. Any two elements of the Pauli group either commute or anticommute, and so the inclusion of two elements which do not commute forces the group to contain $-I$ in order to be closed. Thus, any nonabelian subgroup of the Pauli group has a trivial code space. We will therefore have to extend beyond the Pauli group to consider such symmetries.

Let us make our construction for detecting an error explicit. We consider a logical codeword $\ket{\psi}_L \in \mathrm{Sym}_G$, and assume a unitary error operator $E$ acts on this state. In general, $E$ may not commute with $W(g)$, thus the corrupted state $E \ket{\psi}_L$ may no longer be symmetric. To detect this, we apply a $G$-Bose symmetry test (shown in Figure~\ref{fig:g-bose}) that measures the overlap with the code space by computing
\begin{equation}
    p = \bra{\psi} E^\dagger \Pi_G E \ket{\psi} = \| \Pi_G E \ket{\psi} \|^2.
\end{equation}
If $p < 1$, we conclude that the state has partially leaked outside the code space, indicating a detectable error. If $p=0$, the state is completely orthogonal to the code space, and error detection is conclusive. Thus, the projector $\Pi_G$ serves as a generalized stabilizer measurement detecting symmetry-breaking errors. The downside of this approach is that after the measurement, a corrupted state is potentially altered if the error that occurred does not commute with the projector $\Pi_G$. %unless the error that occurred happens to commute with the projection $\Pi_G$. 
Ideally, we would make a measurement that determines exactly which subspace the logical state has rotated into so that we can attempt to correct the error. This is the subject of the next section.

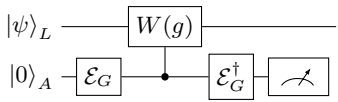
\begin{figure}
    \begin{tikzpicture}
        \begin{yquant}
    
            qubit {$\ket{\psi}_L$} L;
            qubit {$\ket{0}_A$} A;
            hspace {0.1cm} A;
            box {$\mathcal{E}_G$} A;
            box {$W(g)$} L | A;
            box {$\mathcal{E}_G^\dagger$} A;
            hspace {0.1cm} A;
            measure A;
            discard A;
    
        \end{yquant}
    \end{tikzpicture}
    \caption{$G$-Bose Symmetry Test. The gate $\mathcal{E}_G$ prepares the uniform superposition over group algebra basis for $G$ and the controlled $W(g)$ operation triggers when the ancillary register is in the state $\ket{g}$.}
    \label{fig:g-bose}
\end{figure}

\subsection{Isotypic Syndrome Extraction}
\label{sec:syndrome_extraction}

To fully diagnose the error, beyond detection, we must determine into which symmetry sector the corrupted state was mapped. We do so by measuring projection operators onto isotypic components. The Hilbert space decomposes as \eqref{eq:hilbert-decomp} and the projectors onto each isotypic component $\mathcal{H}_\lambda^{\oplus m_\lambda}$ are given in the following lemma.
\begin{lemma}\label{lemma:lambda-projection}
    Let $d_\lambda$ be the dimension of the irrep $\rho_\lambda$. The operators
    \begin{equation}
        \Pi_\lambda=\frac{d_\lambda}{|G|}\sum_{g\in G}\chi_\lambda(g^{-1})W(g)
    \end{equation}
    are the orthogonal projections onto each isotypic component $\mathcal{H}^{\oplus m_\lambda}$.
\end{lemma}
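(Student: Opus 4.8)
The plan is to show that $\Pi_\lambda$ acts as the identity on the $\lambda$-isotypic component $\mathcal{H}_\lambda^{\oplus m_\lambda}$ and annihilates every other isotypic component; since these components exhaust $\mathcal{H}$ by the decomposition in Eq.~\eqref{eq:hilbert-decomp}, this characterizes $\Pi_\lambda$ uniquely as the desired projection. The first reduction I would make is to exploit that $\Pi_\lambda$ is built solely from the operators $W(g)$, so it respects the block decomposition $W=\bigoplus_\mu\rho_\mu^{\oplus m_\mu}$. Restricted to any single irreducible constituent of type $\rho_\mu$, the operator becomes $\tfrac{d_\lambda}{|G|}\sum_{g\in G}\chi_\lambda(g^{-1})\rho_\mu(g)$, so the entire problem reduces to evaluating this expression on a single irreducible representation.

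The heart of the argument is to show that, on an irreducible $\rho_\mu$, this restriction equals $\delta_{\lambda\mu}I_{d_\mu}$, and for this I would use Schur's lemma together with the orthogonality relations for characters. The key observation is that $B_\mu:=\sum_{g\in G}\chi_\lambda(g^{-1})\rho_\mu(g)$ intertwines $\rho_\mu$ with itself: conjugating by $\rho_\mu(h)$ and substituting $g\mapsto h^{-1}gh$ leaves the sum invariant precisely because $\chi_\lambda$ is a class function. Since $\rho_\mu$ is irreducible over $\mathbb{C}$, Schur's lemma forces $B_\mu=cI_{d_\mu}$ for some scalar $c$. Taking the trace gives
\begin{equation}
    c\,d_\mu=\tr(B_\mu)=\sum_{g\in G}\chi_\lambda(g^{-1})\chi_\mu(g)=|G|\,\delta_{\lambda\mu},
\end{equation}
by the first orthogonality relation, so $c=|G|\delta_{\lambda\mu}/d_\mu$. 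Multiplying by the normalization $d_\lambda/|G|$ and using $d_\mu=d_\lambda$ when $\mu=\lambda$ yields exactly $\delta_{\lambda\mu}I_{d_\mu}$.

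Reassembling the blocks then shows $\Pi_\lambda$ is the identity on $\mathcal{H}_\lambda^{\oplus m_\lambda}$ and zero on each $\mathcal{H}_\mu^{\oplus m_\mu}$ with $\mu\neq\lambda$, which is precisely the projection onto the isotypic component. To confirm that this projection is \emph{orthogonal}, I would verify self-adjointness directly: using unitarity in the form $W(g)^\dagger=W(g^{-1})$ together with $\overline{\chi_\lambda(g^{-1})}=\chi_\lambda(g)$, taking the adjoint of $\Pi_\lambda$ and relabeling $g\mapsto g^{-1}$ returns $\Pi_\lambda$ unchanged; idempotency $\Pi_\lambda^2=\Pi_\lambda$ is inherited immediately from the blockwise computation. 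The main obstacle is the scalar evaluation in the second step: everything hinges on correctly invoking Schur's lemma and matching the normalization constant $d_\lambda/|G|$ against the trace, while the class-function change of variables establishing the intertwining property, though routine, is where the nonabelian structure genuinely enters.
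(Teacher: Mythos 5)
Your proof is correct, but it takes a genuinely different route from the paper's. You block-diagonalize $\Pi_\lambda$ along the decomposition $W=\bigoplus_\mu\rho_\mu^{\oplus m_\mu}$, observe that the restriction $B_\mu=\sum_{g\in G}\chi_\lambda(g^{-1})\rho_\mu(g)$ commutes with $\rho_\mu$ because $\chi_\lambda$ is a class function, and then invoke Schur's lemma plus the first (character) orthogonality relation to pin down the scalar, concluding that $\Pi_\lambda$ acts as $\delta_{\lambda\mu}I$ on each irreducible block --- after which idempotency, the image characterization, and (with your explicit self-adjointness check) orthogonality all fall out at once. The paper instead works at the level of matrix elements: it proves idempotency directly via the convolution identity $\sum_{h}\chi_\lambda(hk^{-1})\chi_\lambda(h^{-1})=\tfrac{|G|}{d_\lambda}\chi_\lambda(k^{-1})$ (established with the matrix-element Schur orthogonality relations and a $\swapop$-operator trace trick), then separately shows $\Pi_\lambda$ annihilates the other isotypic components and fixes $\mathcal{H}_\lambda^{\oplus m_\lambda}$. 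Your argument is the more conceptual and economical one --- a single blockwise computation yields everything, and it proves the slightly stronger statement that $\Pi_\lambda$ is the identity on every irreducible constituent of type $\lambda$; it is also more careful on one point, since you verify self-adjointness explicitly, whereas the paper's notion of ``orthogonal'' rests implicitly on the mutual orthogonality of isotypic components under a unitary representation. What the paper's computation buys is self-containedness at the matrix-element level and the convolution identity itself, which mirrors the calculation the authors reuse in their syndrome-extraction analysis.
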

\begin{proof}
    We first show that each $\Pi_\lambda$ is a projection. Observe that
    \begin{align}
        \Pi_\lambda^2&=\frac{d_\lambda^2}{|G|^2}\sum_{g,h\in G}\chi_\lambda(g^{-1})\chi_\lambda(h^{-1})W(gh)\\
        &=\frac{d_\lambda^2}{|G|^2}\sum_{h,k\in G}\chi_\lambda(hk^{-1})\chi_\lambda(h^{-1})W(k).\label{eq:sub-here}
    \end{align}
    Now notice that
    \begin{align}
        \chi_\lambda(hk^{-1})\chi_\lambda(h^{-1})&=\tr[\rho_\lambda(hk^{-1})\otimes\rho_\lambda(h^{-1})]\\
        &=\tr[(\rho_\lambda(h)\otimes\rho_\lambda(h^{-1}))(\rho_\lambda(k^{-1})\otimes I)],
    \end{align}
    so that summing over $h$ produces
    \begin{equation}
    \begin{aligned}
        \sum_{h\in G}&\chi_\lambda(hk^{-1})\chi_\lambda(h^{-1})\\
        &=\tr\left[\sum_{h\in G}(\rho_\lambda(h)\otimes\rho_\lambda(h^{-1}))(\rho_\lambda(k^{-1})\otimes I)\right],
    \end{aligned}
    \end{equation}
    but by the matrix element Schur orthogonality relations, 
    \begin{equation*}
        \sum_{g\in G}\rho_\lambda(g)^*_{ij}\rho_\mu(g)_{kl}=\delta_{\lambda\mu}\delta_{ik}\delta_{jl}\frac{|G|}{d_\lambda} \, .
    \end{equation*} From here, it can be shown by direct calculation that $\sum_{h\in G}(\rho_\lambda(h)\otimes\rho_\lambda(h^{-1}))=\frac{|G|}{d_\lambda}\swapop$. Thus,
    \begin{equation}
        \sum_{h\in G}\chi_\lambda(hk^{-1})\chi_\lambda(h^{-1})=\frac{|G|}{d_\lambda}\tr\left[\swapop(\rho_\lambda(k^{-1})\otimes I)\right],
    \end{equation}
    and since $\tr\left[\swapop(A\otimes I)\right]=\sum_{i,j}\bra{ij}\swapop(A\otimes I)\ket{ij}=\sum_{i,j}\bra{ji}(A\otimes I)\ket{ij}=\tr[A]$, we have
    \begin{equation}
        \sum_{h\in G}\chi_\lambda(hk^{-1})\chi_\lambda(h^{-1})=\frac{|G|}{d_\lambda}\chi_\lambda(k^{-1}).
    \end{equation}
    Now substituting into \eqref{eq:sub-here}, we see that
    \begin{equation}
        \Pi_\lambda^2 = \frac{d_\lambda}{|G|}\sum_{k\in G}\chi_\lambda(k^{-1})W(k)=\Pi_\lambda,
    \end{equation}
    establishing that $\Pi_\lambda$ is a projection. To see that this is an orthogonal projection, let $\lambda,\mu\in\widehat{G}$ with $\lambda\ne\mu$. Then for any $\ket{\psi_\mu}\in\mathcal{H}_\mu^{\oplus m_\mu}$, we have
    \begin{align}
        \Pi_\lambda\ket{\psi_\mu}&=\frac{d_\lambda}{|G|}\sum_{g\in G}\chi_\lambda(g^{-1})W(g)\ket{\psi_\mu}\\
        &=\frac{d_\lambda}{|G|}\sum_{g\in G}\chi_\lambda(g^{-1})\rho_\mu^{\oplus m_\mu}(g)\ket{\psi_\mu},
    \end{align}
    and this latter sum vanishes by the Schur orthogonality relations. This also shows that the image of $\Pi_\lambda$ is contained in $\mathcal{H}_\lambda^{\oplus m_\lambda}$. All that remains is to show that $\mathcal{H}_\lambda^{\oplus m_\lambda}$ is contained in the image of $\Pi_\lambda$. To this end, let $\ket{\psi_\lambda}\in\mathcal{H}_\lambda^{\oplus m_\lambda}$ and expand $\ket{\psi_\lambda}=\sum_{k}\psi_k\ket{k}$. Then
    \begin{align}
        \Pi_\lambda\ket{\psi_\lambda} &= \frac{d_\lambda}{|G|}\sum_{g\in G}\sum_{i,j,k}\chi_\lambda(g^{-1})[\rho^{\oplus m_\lambda}_\lambda(g)]_{i,j}\psi_k\ket{k}\\
        &=\sum_{k}\psi_k\ket{k}=\ket{\psi_\lambda},
    \end{align}
    where we have again applied Schur orthogonality of matrix elements.
\end{proof}

Performing a measurement with respect to these projectors resolves the corrupted state into irreducible sectors. We take as our syndrome the label $\lambda$ of the measured irrep and perform a recovery operation based on this label. In contrast to the usual stabilizer syndrome extraction perspective, we view syndrome extraction as measuring the symmetry content of the state rather than eigenvalues of commuting Pauli operators. Indeed, it turns out that stabilizer codes fit nicely into this symmetry-aware picture. Measuring these eigenvalues can be viewed as measuring symmetry with respect to the action of a product of cyclic groups (see Section~\ref{sec:stabilizer-codes} for more details).

Let us make our analog of syndrome extraction more precise. We proceed with a structure similar to a $G$-Bose symmetry test \cite{harrow2013,laBorde2023,bradshaw2023,laborde2023thesis} but modify the ancillary register slightly by applying a quantum Fourier transform (QFT) over $G$ in the final step, as shown in Figure~\ref{fig:isotypic-syndrome}. Given a corrupted state $E\ket{\psi}_L$, we first append an ancillary register in the all-zeros state $\ket{0}_A$ and transform it into a uniform superposition over the basis $\{\ket{g}_A\}_{g\in G}$ of the group algebra $\mathbb{C}[G]$, leaving the state
\begin{equation}\label{eq:ancilla+input1}
    \frac{1}{\sqrt{|G|}}\sum_{g\in G}E\ket{\psi}_L\ket{g}_A.
\end{equation}
The next step %in a $G$-Bose symmetry test 
is to apply the operation $W(g)$ to the logical register conditioned on the ancillary register being in the $\ket{g}$ state. This produces the state
\begin{equation}\label{eq:ancilla+input2}
    \frac{1}{\sqrt{|G|}}\sum_{g\in G}W(g)E\ket{\psi}_L\ket{g}_A.
\end{equation}
In an ordinary $G$-Bose symmetry test, we would now uncompute the ancillary register by applying the adjoint of the operator that produces the uniform superposition and conclude by measuring in the computational basis. Instead, we perform a QFT over $G$ on the ancillary register, denoted $\operatorname{QFT}_G$ \cite{moore2006,kawano2016,beals1997,hoyer1997}. For groups like $S_3$, explicit circuits implementing the $\operatorname{QFT}_G$ are known \cite{kawano2016}. For larger or more complex groups, one may exploit group structure or approximate QFT techniques. 

The $\operatorname{QFT}_G$ maps the basis of the group algebra to the Fourier basis indexed by irreducible representations and representation spaces:
\begin{equation}\label{eq:group-QFT}
    \textnormal{QFT}_G : \ket{g} \mapsto \sum_{\lambda \in \widehat{G}} \sqrt{\frac{d_\lambda}{|G|}} \sum_{i,j=1}^{d_\lambda} [\rho_\lambda(g)]^*_{i,j} \ket{\lambda, i, j},
\end{equation}
where $d_\lambda$ is the dimension of the irrep $\rho_\lambda$. Thus, we are left with the state
\begin{equation}
    \frac{1}{|G|}\sum_{g\in G}\sum_{\lambda\in\widehat{G}}\sum_{i,j=1}^{d_\lambda}\sqrt{d_\lambda}[\rho_\lambda(g)]^*_{i,j}W(g)E\ket{\psi}_L\ket{\lambda,i,j}_A,
\end{equation}
and now measuring the irrep label yields $\lambda$ with probability $p_\lambda$ given by
\begin{equation}
    \left\|\frac{1}{|G|}\sum_{g\in G}\sum_{i,j=1}^{d_\lambda}\sqrt{d_\lambda}[\rho_\lambda(g)]^*_{i,j}W(g)E\ket{\psi}_L\ket{i,j}_A\right\|^2, 
\end{equation}
or equivalently,
\begin{equation}
\frac{d_\lambda}{|G|^2}\sum_{g,h\in G}\sum_{i,j=1}^{d_\lambda}[\rho_\lambda(g)]^*_{i,j}[\rho_\lambda(h)]_{i,j}\bra{\psi}E^\dagger W(h^{-1}g)E\ket{\psi}_L.
\end{equation}
From the identity
\begin{align}
    \sum_{i,j=1}^{d_\lambda}[\rho_\lambda(g)]_{i,j}^*[\rho_\lambda(h)]_{i,j}&=\tr[(\rho_\lambda(g))^\dagger\rho_\lambda(h)]\\
    &=\chi_\lambda(g^{-1}h),
\end{align}
the probability of obtaining $\lambda$ becomes
\begin{equation}
    p_\lambda=\frac{d_\lambda}{|G|^2}\sum_{g,h\in G}\chi_\lambda(g^{-1}h)\bra{\psi}E^\dagger W(h^{-1}g)E\ket{\psi}_L.
\end{equation}
Now substituting $k=h^{-1}g$, we have
\begin{align}
    p_\lambda&=\frac{d_\lambda}{|G|}\sum_{k\in G}\chi_\lambda(k^{-1})\bra{\psi}E^\dagger W(k)E\ket{\psi}_L\\
    &=\bra{\psi}E^\dagger\Pi_\lambda E\ket{\psi}_L.
\end{align}

Thus, for a state that is in the code space (the $G$-symmetric subspace), the measurement produces the label corresponding to the trivial representation with certainty. Likewise, if the error rotates the logical state entirely into the isotypic component with label $\lambda$, then the measurement produces $\lambda$ with certainty. In either case, the logical register is projected onto the corresponding isotypic component which acts trivially on the logical state. Thus, our measurement acts as a syndrome extraction procedure with syndrome $\lambda$. Moreover, when the error rotates the logical register into a \emph{combination} of isotypic subspaces, the above procedure forces the state to collapse into exactly one isotypic component, so that we observe the discretization of errors characteristic of the stabilizer codes.

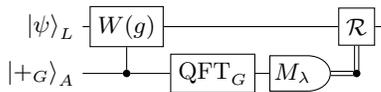
\begin{figure}
    \begin{tikzpicture}
        \begin{yquant}
    
            qubit {$\ket{\psi}_L$} L;
            qubit {$\ket{+_G}_A$} A;
            box {$W(g)$} L | A;
            box {$\qft_G$} A;
            hspace {0.1cm} A;
            dmeter {$M_\lambda$} A;
            box {$\mathcal R$} L | A;
            discard A;
    
        \end{yquant}
    \end{tikzpicture}
    \caption{Isotypic Syndrome Extraction Procedure. The state $\ket{+_G}$ is the uniform superposition over $G$, $M_\lambda$ denotes the measurement with respect to the isotypic label basis, and $\mathcal{R}$ denotes the recovery operation. }
    \label{fig:isotypic-syndrome}
\end{figure}

\subsection{Logical Operations and Errors}\label{sec:logical-ops}

Just as with the stabilizer codes, the logical operations and errors in these generalized codes are classified by taking quotients with respect to various subgroups. Although, since any nonabelian subgroup of the Pauli group contains $-1$ and therefore induces a trivial code space, we must extend the image of the chosen representation beyond the Pauli group to observe interesting phenomenon. Let $\mathcal{Q}$ be a subgroup of $U(\mathcal{H})$ containing the image $W(G)$. The normalizer $N_{\mathcal{Q}}(W(G))$ of $W(G)$ in $\mathcal{Q}$ sits between these two groups so that we have the sequence
\begin{equation}
    W(G)\trianglelefteq N_{\mathcal{Q}}(W(G))\le \mathcal{Q}.
\end{equation}

\begin{proposition}\label{prop:logical}
    The elements of $N_{\mathcal{Q}}(W(G))/W(G)$ are nontrivial logical operations. Operators belonging to the same equivalence class act equivalently on $\sym_G$.
\end{proposition}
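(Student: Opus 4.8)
The plan is to split the statement into two provable pieces. First, I would show that every element of $N_{\mathcal{Q}}(W(G))$ preserves the code space $\sym_G$ and hence descends to a well-defined unitary logical operation on it. Second, I would show that passing to the quotient by $W(G)$ is exactly what makes this assignment well-defined, since $W(G)$ is precisely the subgroup acting trivially on $\sym_G$. The natural hinge for the first piece is to prove that any $U \in N_{\mathcal{Q}}(W(G))$ commutes with the code projector $\Pi_G$ of Lemma~\ref{lemma:symg-projection}.

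For the commutation I would compute $U\Pi_G U^{-1} = \frac{1}{|G|}\sum_{g\in G} U W(g) U^{-1}$. Since $U$ normalizes $W(G)$, conjugation by $U$ is a group automorphism of $W(G)$ and therefore permutes the multiset $\{W(g)\}_{g\in G}$ among itself, leaving the sum invariant; thus $U\Pi_G U^{-1} = \Pi_G$, i.e. $[U,\Pi_G]=0$. Because $U$ is unitary and commutes with the orthogonal projector onto $\sym_G$, it maps $\sym_G = \imag(\Pi_G)$ bijectively onto itself, and its restriction $U|_{\sym_G}$ is a genuine unitary logical operation. Invertibility on $\sym_G$ is automatic, since $U^{-1}$ also lies in the normalizer and is handled by the same argument.

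Next I would establish the quotient structure and the coset-independence asserted in the second sentence. By the very definition of $\sym_G$, each $W(g)$ acts as the identity on the code space, so $W(G)$ lands in the trivially-acting operators. Hence $U \mapsto U|_{\sym_G}$ is a group homomorphism from $N_{\mathcal{Q}}(W(G))$ into the unitaries on $\sym_G$ whose kernel contains $W(G)$; since $W(G)\trianglelefteq N_{\mathcal{Q}}(W(G))$, it factors through the quotient. Concretely, if $U_1 W(G) = U_2 W(G)$, then $U_2^{-1}U_1 = W(g)$ for some $g$, and for any $\ket{\psi}\in\sym_G$ we get $U_1\ket{\psi} = U_2 W(g)\ket{\psi} = U_2\ket{\psi}$, so the two operators act identically on $\sym_G$. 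This proves the second sentence and mirrors the stabilizer construction, where logical operators live in $N(S)/S$.

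The main obstacle is the word \emph{nontrivial}. What the argument cleanly delivers is that $W(G)$ is the subgroup we must factor out because it is forced to act as the identity, leaving cosets that can act nontrivially. The delicate point is whether distinct nonidentity cosets necessarily act \emph{distinctly} on $\sym_G$: a priori the normalizer could contain an element restricting to the identity on the (possibly proper) subspace $\sym_G$ without belonging to $W(G)$, so the restriction homomorphism need not be injective on the quotient. I would address this either by reading "nontrivial" as the statement that the quotient removes exactly the symmetry-enforced trivial operators, or, where a sharp statement is wanted, by additionally requiring that the induced action of $N_{\mathcal{Q}}(W(G))/W(G)$ on $\sym_G$ be faithful, which precisely rules out nonidentity cosets that collapse to the identity.
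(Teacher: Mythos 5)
Your proposal is correct and follows essentially the same route as the paper: both hinge on the fact that conjugation by a normalizer element lands back in $W(G)$ to show that $\sym_G$ is preserved (you phrase this as the operator identity $U\Pi_G U^{-1}=\Pi_G$, while the paper applies $N^\dagger W(g)N$ directly to a symmetric state), and your coset computation $U_1 = U_2 W(g)$ for the second sentence is identical to the paper's. Your closing caveat about the word \emph{nontrivial} is apt --- the paper's proof likewise establishes only preservation of the code space and coset-independence, not that distinct nonidentity cosets act distinctly (faithfully) on $\sym_G$.
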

\begin{proof}
    Let $N\in N_{\mathcal{Q}}(W(G))$ and let $\ket{\psi}_L\in\sym_G$. Then
    \begin{align}
        \Pi_GN\ket{\psi}_L&=NN^\dagger\Pi_G N\ket{\psi}_L\\
        &=N\frac{1}{|G|}\sum_{g\in G}N^\dagger W(g)N\ket{\psi}_L.
    \end{align}
    Noting that $N^\dagger W(g)N$ is in $W(G)$, it follows that
    \begin{equation}
        \Pi_GN\ket{\psi}_L=N\ket{\psi}_L,
    \end{equation}
    so that $N\ket{\psi}_L$ is in $\sym_G$. Thus, $N$ is a logical operation.

    Suppose now that $N$ and $M$ belong to the same equivalence class in the quotient $N_{\mathcal{Q}}(W(G))/W(G)$. Then $N=MW(g)$ for some $g\in G$. It follows that $N\ket{\psi}_L=MW(g)\ket{\psi}_L=M\ket{\psi}_L$. Thus, operators belonging to the same equivalence class act equivalently on $\sym_G$.
\end{proof}

In the stabilizer code formalism, we choose $\mathcal{Q}$ to be the Pauli group and find that $N_{\mathcal{Q}}(W(G))\trianglelefteq\mathcal{Q}$. The analog is not true in this generalized framework because we no longer have the property that operators which do not commute must anticommute. As a consequence of the normalizer not being normal, the quotient of $\mathcal{Q}$ by the normalizer $N_{\mathcal{Q}}(W(G))$ is not necessarily a group. Moreover, elements of the same left and right cosets of $N_{\mathcal{Q}}(W(G))$ in $\mathcal{Q}$ may not act in the same way on logical states.

Let us refine this observation. Suppose $E$ and $F$ are in the same right coset. Then there is an $N\in N_{\mathcal{Q}}(W(G))$ such that $F=NE$, and it follows that
\begin{equation}
    \bra{\psi}F^\dagger\Pi_\lambda F\ket{\psi}_L=\bra{\psi}E^\dagger N^\dagger\Pi_\lambda NE\ket{\psi}_L.
\end{equation}
In order for $E$ and $F$ to have the same syndrome distribution, we require that $N^\dagger\Pi_\lambda N=\Pi_\lambda$ for all $\lambda$. Exploring this restriction further, we have
\begin{equation}
    N^\dagger\Pi_\lambda N=\frac{d_\lambda}{|G|}\sum_{g\in G}\chi_\lambda(g^{-1})N^\dagger W(g)N\, .
\end{equation}
Recalling that conjugation by $N$ is an automorphism of $W(G)$, we conclude that (when $W$ is faithful) there exists an automorphism $\varphi:G\to G$ such that $W(\varphi(g))=N^\dagger W(g)N$. It therefore follows that
\begin{align}
    N^\dagger\Pi_\lambda N&=\frac{d_\lambda}{|G|}\sum_{g\in G}\chi_\lambda(g^{-1})W(\varphi(g))\\
    &=\frac{d_\lambda}{|G|}\sum_{g\in G}\chi_\lambda((\varphi^{-1}(g))^{-1})W(g),
\end{align}
so that conjugation by an element of the normalizer transforms the isotypic projector into a twisted version of itself. In fact, $\chi_\lambda\circ\varphi^{-1}$ is the character of the representation $\rho_\lambda\circ\varphi^{-1}$, which is irreducible since $\varphi^{-1}$ is an automorphism. Thus, there is a $\mu$ such that $\chi_\lambda\circ\varphi^{-1}=\chi_{\mu}$, and $N^\dagger\Pi_\lambda N=\frac{d_\lambda}{d_\mu}\Pi_{\mu}\, .$ Of course, since $N$ is unitary, we must have $d_\lambda=d_\mu$, and it follows that 
\begin{equation}
    N^\dagger\Pi_\lambda N=\Pi_{\mu}\, .
\end{equation}
As a consequence, an element of the normalizer can rotate a state from one nontrivial isotypic component to another; however, the trivial isotypic component (the code space) is invariant under this action as shown in Proposition~\ref{prop:logical}.

The preceeding discussion shows that there are two classes of logical operations: those that commute with $W(G)$ and therefore leave isotypic subspaces invariant, and those that permute the non-trivial isotypic subspaces. Ideally, no error is applied to our logical qubit, and so this rotation of non-trivial isotypic subspace by elements outside of the commutant of $W(G)$ is inconsequential. However, when such an error does happen, the state is pushed into a non-trivial isotypic subspace and subsequent applications of logical operations further permute the isotypic subspace if such operations are not in the commutant. Thus, in order to apply a recovery operation, we must keep track of both the isotypic syndromes measured and the non-commutant logical operations performed.

Let us determine which errors can occur. To ensure that errors in the same coset produce the same syndrome, we must restrict our attention from the normalizer $N_{\mathcal{Q}}(W(G))$ to the subgroup $\mathcal{N}_0$ of elements which fix the irrep label under conjugation. This motivates our next proposition:

\begin{proposition}
    Let $E,F\in U(\mathcal{H})$ be errors. If $E$ and $F$ belong to the same right coset of $\mathcal{N}_0$ in $U(\mathcal{H})$, then they have the same syndrome distribution.
\end{proposition}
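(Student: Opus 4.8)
The plan is to reduce the statement to the single defining property of $\mathcal{N}_0$. Recall from Section~\ref{sec:syndrome_extraction} that the isotypic syndrome extraction procedure applied to a corrupted state $E\ket{\psi}_L$ returns the label $\lambda$ with probability $p_\lambda=\bra{\psi}E^\dagger\Pi_\lambda E\ket{\psi}_L$, so that the \emph{syndrome distribution} of an error $E$ is precisely the family $\{p_\lambda\}_{\lambda\in\widehat{G}}$. Our goal is to show that replacing $E$ by any $F$ in the same right coset of $\mathcal{N}_0$ leaves every $p_\lambda$ unchanged. Before doing so I would confirm that $\mathcal{N}_0=\{N:N^\dagger\Pi_\lambda N=\Pi_\lambda\ \forall\lambda\}$ is genuinely a subgroup of $U(\mathcal{H})$: it contains the identity, and if $N,M$ both fix every $\Pi_\lambda$ under conjugation then so do $NM$ and $N^{-1}=N^\dagger$, so that the coset language in the statement is well posed.

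With this in place, the argument is a direct substitution. Membership of $E$ and $F$ in a common right coset of $\mathcal{N}_0$ means $F=NE$ for some $N\in\mathcal{N}_0$, matching the convention used above for the normalizer. Then for each $\lambda\in\widehat{G}$,
\begin{equation}
    \bra{\psi}F^\dagger\Pi_\lambda F\ket{\psi}_L=\bra{\psi}E^\dagger N^\dagger\Pi_\lambda N E\ket{\psi}_L=\bra{\psi}E^\dagger\Pi_\lambda E\ket{\psi}_L,
\end{equation}
where the last equality is exactly the defining relation $N^\dagger\Pi_\lambda N=\Pi_\lambda$ of $\mathcal{N}_0$. Since this holds for every irrep label, $E$ and $F$ produce identical distributions $\{p_\lambda\}_{\lambda\in\widehat{G}}$, which is the claim.

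In truth the mathematical work has already been carried out in the paragraphs preceding the proposition, where it is shown that a general normalizer element sends $\Pi_\lambda\mapsto\Pi_\mu$ with $d_\lambda=d_\mu$; the point of $\mathcal{N}_0$ is that it collects exactly those elements for which this permutation of labels is trivial, i.e.\ $\mu=\lambda$ for all $\lambda$. Consequently I do not expect a genuine obstacle in the calculation itself. The only points requiring care are bookkeeping ones: fixing the right-coset convention so that $F=NE$ rather than $EN$, and observing that the proof invokes nothing about $N$ beyond $N^\dagger\Pi_\lambda N=\Pi_\lambda$, so that $\mathcal{N}_0$ may be taken to be the full common stabilizer of the isotypic projectors inside $U(\mathcal{H})$, consistent with the proposition treating it as a subgroup of $U(\mathcal{H})$ rather than merely of the normalizer.
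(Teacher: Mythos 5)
Your proof is correct and follows essentially the same route as the paper's: write the two errors as differing by an element $N\in\mathcal{N}_0$ and substitute, using the defining property $N^\dagger\Pi_\lambda N=\Pi_\lambda$ to conclude the distributions $\{p_\lambda\}$ coincide. The extra remarks (that $\mathcal{N}_0$ is a subgroup, and that the coset convention $F=NE$ versus $E=NF$ is immaterial) are sound but inessential additions to what is the paper's own one-line argument.
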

\begin{proof}
    If $E$ and $F$ belong to the same right coset of $\mathcal{N}_0$ in $U(\mathcal{H})$, then $E=NF$ for some $N\in\mathcal{N}_0$. It follows that $\bra{\psi}E^\dagger\Pi_\lambda E\ket{\psi}_L=\bra{\psi}F^\dagger N^\dagger\Pi_\lambda NF\ket{\psi}_L=\bra{\psi}F^\dagger \Pi_\lambda F\ket{\psi}_L$.
\end{proof}

Since our symmetry-based construction operates in the same vein as a stabilizer code, it retains the famous Knill-Laflamme conditions \cite{knill2000}, which we restate here for completeness.

\begin{theorem}
    Let $\mathcal{E}$ be a set of linear operators acting on the physical Hilbert space $\mathcal{H}$. Then all elements of $\mathcal{E}$ are detectable and correctable by a $G$-code if and only if for each $E\in\mathcal{E}^\dagger\mathcal{E}$, there is a constant $c_E$ depending on $E$ such that
    \begin{equation}
        \Pi_G E\Pi_G=c_E\Pi_G.
    \end{equation}
\end{theorem}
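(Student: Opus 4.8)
The plan is to recognize that, since $\Pi_G$ is simply the orthogonal projector onto the code space $\sym_G$ (Lemma~\ref{lemma:symg-projection}), this statement is nothing more than the standard Knill--Laflamme theorem with the generic code projector replaced by $\Pi_G$; none of the isotypic structure developed in Section~\ref{sec:syndrome_extraction} is needed. I would fix a Kraus decomposition $\mathcal{E}=\{E_i\}$ of the error channel and declare the set correctable precisely when there exists a recovery channel $\mathcal{R}=\{R_k\}$ with $(\mathcal{R}\circ\mathcal{E})(\rho)=\rho$ for every state $\rho=\Pi_G\rho\Pi_G$ supported on $\sym_G$ (individual detectability being the special case obtained by including the identity in $\mathcal{E}$). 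Two preliminary observations make the scalar condition in the theorem statement manageable: since $(E_i^\dagger E_j)^\dagger=E_j^\dagger E_i$, the matrix $c=(c_{ij})$ defined by $\Pi_G E_i^\dagger E_j\Pi_G=c_{ij}\Pi_G$ is automatically Hermitian, and it is positive semidefinite because for any fixed unit vector $\ket{\psi}\in\sym_G$ it is the Gram matrix of the vectors $E_i\Pi_G\ket{\psi}$. These let the pointwise scalar conditions be packaged into a single Hermitian, positive semidefinite matrix.

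For the reverse implication I would diagonalize $c=UDU^\dagger$ and pass to the rotated Kraus operators $F_k=\sum_i\overline{U_{ik}}\,E_i$, which satisfy the diagonalized relations $\Pi_G F_k^\dagger F_l\Pi_G=d_k\,\delta_{kl}\,\Pi_G$. For each $k$ with $d_k\neq 0$, the polar decomposition of $F_k\Pi_G$ furnishes a partial isometry $V_k$ with $F_k\Pi_G=\sqrt{d_k}\,V_k\Pi_G$, and the diagonal relations force the ranges $V_k\sym_G$ to be mutually orthogonal subspaces of $\mathcal{H}$. Defining the recovery Kraus operators $R_k=\Pi_G V_k^\dagger$, together with one final term projecting the leftover orthogonal complement back into $\sym_G$, a direct computation gives $R_k F_l\Pi_G=\sqrt{d_l}\,\delta_{kl}\Pi_G$, from which $(\mathcal{R}\circ\mathcal{E})(\rho)=\rho$ follows on all code states, establishing correctability.

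For the forward implication I would invoke the structural fact that a channel acting as the identity on every state supported in $\sym_G$ must have each of its Kraus operators proportional to $\Pi_G$ on that subspace: if $\Phi=\{A_\ell\}$ satisfies $\Phi(\rho)=\rho$ for all $\rho=\Pi_G\rho\Pi_G$, then $A_\ell\Pi_G=\mu_\ell\Pi_G$ for scalars $\mu_\ell$. Applying this to the composite channel $\mathcal{R}\circ\mathcal{E}$, whose Kraus operators are the products $R_k E_i$, yields $R_k E_i\Pi_G=\mu_{ki}\Pi_G$. The completeness relation $\sum_k R_k^\dagger R_k=I$ then collapses the product:
\begin{equation}
    \Pi_G E_i^\dagger E_j\Pi_G=\sum_k\Pi_G E_i^\dagger R_k^\dagger R_k E_j\Pi_G=\Big(\sum_k\overline{\mu_{ki}}\,\mu_{kj}\Big)\Pi_G,
\end{equation}
so the bracketed scalar is the required constant $c_{E_i^\dagger E_j}$. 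The main obstacle is precisely this structural lemma: that a recovery acting as the identity on the code space pins each composite Kraus operator to a scalar multiple of $\Pi_G$. This is the genuine heart of the argument and rests on the fact that the identity channel has an essentially unique Kraus representation up to unitary mixing, so any Kraus family for $\mathcal{R}\circ\mathcal{E}$, once restricted to $\sym_G$, must be a unitary recombination of the single operator $\Pi_G$. Everything else reduces to bookkeeping with the resolution of identity and the polar decomposition.
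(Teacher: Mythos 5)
Your proposal cannot diverge from the paper's proof for a simple reason: the paper gives no proof of this theorem at all. It is restated ``for completeness'' with a citation to Knill--Laflamme, so you have supplied what the paper omits. What you wrote is essentially the standard textbook proof of the Knill--Laflamme theorem, correctly specialized to the case where the code projector is $\Pi_G$, and it is sound: the Hermiticity and positive semidefiniteness of the coefficient matrix $c$, the diagonalization and polar-decomposition construction of the recovery map for sufficiency, and the Kraus-uniqueness (unitary freedom) argument pinning each $R_kE_i\Pi_G$ to a multiple of $\Pi_G$ for necessity are exactly the right ingredients, and you correctly identify the last of these as the crux. You are also right that none of the paper's isotypic machinery is needed here --- only Lemma~\ref{lemma:symg-projection}, i.e.\ that $\Pi_G$ is an orthogonal projector. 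Two bookkeeping points to tighten: with the convention $c=UDU^\dagger$, the rotated operators should be $F_k=\sum_i U_{ik}E_i$ (no complex conjugate), since then $\Pi_G F_k^\dagger F_l\Pi_G=(U^\dagger cU)_{kl}\Pi_G=d_k\delta_{kl}\Pi_G$, whereas your conjugated version diagonalizes $c^T$ rather than $c$; and your final computation actually yields $(\mathcal{R}\circ\mathcal{E})(\rho)=\bigl(\sum_k d_k\bigr)\rho$, which equals $\rho$ exactly only when $\{E_i\}$ is trace-preserving (so that $\tr c=1$) --- for a general error set one should state correctability as $(\mathcal{R}\circ\mathcal{E})(\rho)\propto\rho$. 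Neither point affects the structure of the argument.
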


Since the logical operators are no longer confined to the Pauli group, we must generalize our notion of distance to capture the new logical operations that can arise. Consider the case of $n$ physical qubits, so that the Hilbert space is $\mathcal{H}=\mathbb{C}^{2^n}$, and label the qubits $\{0,\ldots,n-1\}$. Choose a generating set $\mathcal{G}$ for $Q$, and let $E\in Q$. Then $E=g_1\cdots g_N$ for some $g_i\in\mathcal{G}$. We define the $G$-weight $w_G(E)$ to be the minimum number $N$ needed to generate $E$. The $G$-distance $d_G$ of the code is then the minimum weight of a nontrivial logical operation:
\begin{equation}
    d_G:=\min_{E\in N_Q(W(G))\setminus W(G)}w_G(E).
\end{equation}
We recover the usual Pauli distance for stabilizer codes by choosing $Q$ to be the Pauli group on $n$ qubits with generating set $\mathcal{G}$ consisting of all single qubit Pauli operators. With this notion of distance, we have the following generalized Knill-Laflamme bound.
\begin{theorem}[Generalized Knill-Laflamme bound]
\label{thm:generalized-KL}
Let $Q \leq U(\mathcal{H})$ and $\mathcal{G}$, $w_G(\cdot)$, $d_G$ be as above.
Assume that the weight $w_G$ satisfies the triangle inequality
\begin{equation}
    w_G(AB) \leq w_G(A) + w_G(B)
\end{equation}
for all $A,B \in Q$. Let
\begin{equation}
    \mathcal{E}_t := \{ E \in N_Q(W(G)) : w_G(E) \leq t \}
\end{equation}
be a set of symmetry-preserving errors of $G$-weight at most $t$, and suppose that
\begin{equation}
    2t < d_G.
\end{equation}
Then the Knill--Laflamme condition holds for $\mathcal{E}_t$ and the codespace $\sym_G$:
for every pair $E,F \in \mathcal{E}_t$ there exists a scalar $c_{E,F} \in \mathbb{C}$
such that
\begin{equation}
    \Pi_{G} E^\dagger F \,\Pi_{G} \;=\; c_{E,F}\, \Pi_{G}.
\end{equation}
In particular, any subset of errors of $G$-weight at most
$\lfloor (d_G - 1)/2 \rfloor$ is correctable.
\end{theorem}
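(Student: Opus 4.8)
The plan is to reduce the generalized Knill--Laflamme condition to a statement about the $G$-weight of a product $E^\dagger F$ and its membership in $W(G)$. The key observation is that both $E$ and $F$ lie in $N_Q(W(G))$ with $w_G(E), w_G(F) \le t$, so by the triangle inequality $w_G(E^\dagger F) \le w_G(E^\dagger) + w_G(F)$. First I would argue that $w_G(E^\dagger) = w_G(E)$: since $\mathcal{G}$ generates the group $Q$ and every generator has an inverse in $Q$, any minimal-length word $E = g_1 \cdots g_N$ yields $E^\dagger = g_N^{-1} \cdots g_1^{-1}$ of the same length, so the weights of an element and its inverse coincide. This gives $w_G(E^\dagger F) \le 2t < d_G$.

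Next I would invoke the definition of $d_G$ as the minimum $G$-weight over $N_Q(W(G)) \setminus W(G)$. Because $N_Q(W(G))$ is a subgroup and both $E^\dagger$ and $F$ lie in it, the product $E^\dagger F$ is again in $N_Q(W(G))$. Having just bounded its weight strictly below $d_G$, the contrapositive of the distance definition forces $E^\dagger F \in W(G)$; otherwise $E^\dagger F$ would be a nontrivial logical operation of weight less than the minimum, a contradiction. So the whole argument hinges on pushing $E^\dagger F$ into the stabilizer group $W(G)$.

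The final step is to evaluate $\Pi_G E^\dagger F \Pi_G$ under the conclusion $E^\dagger F = W(h)$ for some $h \in G$. Here I would use the manifest $G$-invariance of the projector: since $\Pi_G = \frac{1}{|G|}\sum_{g \in G} W(g)$ by Lemma~\ref{lemma:symg-projection}, left multiplication by any $W(h)$ simply permutes the summands, giving $W(h)\Pi_G = \Pi_G$. Therefore
\begin{equation}
    \Pi_G E^\dagger F \Pi_G = \Pi_G W(h) \Pi_G = \Pi_G \Pi_G = \Pi_G,
\end{equation}
so the scalar is $c_{E,F} = 1$ in this case. (More generally, if one allows $E^\dagger F$ to be a scalar multiple of a group element one absorbs that phase into $c_{E,F}$.) The claim about correctability of errors up to weight $\lfloor (d_G - 1)/2 \rfloor$ then follows immediately by setting $t = \lfloor (d_G-1)/2 \rfloor$, which satisfies $2t < d_G$.

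The main obstacle I anticipate is justifying the step $w_G(E^\dagger) = w_G(E)$ cleanly and, more subtly, ensuring that $E^\dagger F \in W(G)$ really does imply $\Pi_G E^\dagger F \Pi_G \propto \Pi_G$ rather than merely commuting with $\Pi_G$; the invariance identity $W(h)\Pi_G = \Pi_G$ resolves this, but it relies on $E^\dagger F$ being an honest element of $W(G)$ rather than a general commutant element. A secondary point of care is that the triangle-inequality hypothesis is assumed rather than derived, so the proof should foreground that assumption and note that it holds for the Pauli generating set, recovering the standard stabilizer bound as a special case.
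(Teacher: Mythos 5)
Your proposal is correct and follows essentially the same route as the paper's proof: bound $w_G(E^\dagger F)$ via the triangle inequality and $w_G(E^\dagger)=w_G(E)$, use the subgroup property of $N_Q(W(G))$ together with the definition of $d_G$ to force $E^\dagger F \in W(G)$, and then conclude $\Pi_G E^\dagger F\, \Pi_G = \Pi_G$ from the trivial action of $W(G)$ on $\sym_G$. Your added care about $w_G(E^\dagger)=w_G(E)$ (which implicitly needs $\mathcal{G}$ closed under inverses) and the phase absorbed into $c_{E,F}$ are reasonable refinements of points the paper also leaves implicit.
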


\begin{proof}
Let $E,F \in \mathcal{E}_t$ and set $A := E^\dagger F \in Q$.  
By definition of $\mathcal{E}_t$ we have $E,F \in N_Q(W(G))$ and
$w_G(E),w_G(F) \leq t$, so that
\begin{align}
    w_G(A) &= w_G(E^\dagger F)\\
    &\leq w_G(E^\dagger) + w_G(F)\\
    &= w_G(E) + w_G(F)\\
    &\leq 2t< d_G,
\end{align}
where we used the triangle inequality and the fact that $w_G(E^\dagger)=w_G(E)$.

Since $E,F \in N_Q(W(G))$ and $N_Q(W(G))$ is a group, we also have
$A = E^\dagger F \in N_Q(W(G))$.  There are now two possibilities:

\smallskip
\noindent
\textbf{(i) $A \in W(G)$.}
In this case $A$ acts as the trivial representation on $\sym_G$.  Therefore
\begin{equation}
    \Pi_{G} A \Pi_{G} = \Pi_{G},
\end{equation}
and we may take $c_{E,F} = 1$ in the Knill-Laflamme condition.

\smallskip
\noindent
\textbf{(ii) $A \notin W(G)$.}
Since $A \in N_Q(W(G))\setminus W(G)$ and $d_G$ is by definition the minimum
$G$-weight over such elements, we must have
\begin{equation}
    w_G(A) \;\geq\; d_G,
\end{equation}
contradicting $w_G(A) < d_G$.

We conclude that only case (i) is possible: for every $E,F\in\mathcal{E}$,
$E^\dagger F \in W(G)$ and acts as the identity on $\sym_G$.
Thus $\Pi_{G} E^\dagger F \Pi_{G} = \Pi_{G}$,
which is the desired Knill-Laflamme condition with $c_{E,F}=1$. Thus for any pair of errors $E,F$ of $G$-weight at most $t$, either $E$ and $F$ act identically on $\sym_G$ (differing only by an element of $W(G)$), or the subspaces $E(\sym_G)$ and $F(\sym_G)$ are orthogonal.  This is precisely the Knill--Laflamme criterion for error correction: the images of the codespace under distinct errors are mutually orthogonal unless the errors have the same logical effect. Therefore a recovery map may be defined by projecting onto each orthogonal error subspace and inverting the corresponding operator. Consequently, every error of $G$-weight at most $t$ is correctable, and in particular all errors of $G$-weight at most $\lfloor (d_G-1)/2\rfloor$ are correctable.

\end{proof}

\subsection{Two Logical Qubit Gates}\label{sec:two-logical}
When designing fault-tolerant architectures, it is convenient to design codes for a single logical qubit such that errors are local in the code space. That is, an error affecting one logical qubit does not necessarily correlate to an error affecting another logical qubit. Of course, the drawback is that universal quantum computation cannot be done with a single logical qubit. Thus, a method has to be created for performing logical operations between the distinct logical qubits, each of which are protected by their own code. Perhaps the most effective way to do this is with lattice surgery in the surface code setting \cite{horsman2012,fowler2019,litinski2019}. We take a similar approach here, joining the logical qubits with an ancillary logical qubit and then performing measurements that couple the systems. The desired logical operation is then recovered after performing a correction operation based on the measurement outcomes.

Take, for example, the $\cnot$ gate. %from one logical qubit to another. 
Denote the control register by $C$, the target register by $T$, and the state of the joint system by $\ket{\psi}_{CT}$. To implement the $\cnot_{CT}$ gate, we first prepare an ancillary logical qubit in the $\ket{+}_A$ state. Then we measure the observable $Z_CZ_A$, followed by the observable $X_AX_T$, and finally the observable $Z_A$. These measurements produce outcomes $a\, ,b,$ and $c$, respectively, and the state (up to normalization) after all three measurements is 
\begin{equation}
    (1+cZ_A)(1+bX_AX_T)(1+aZ_CZ_A)\ket{\psi}_{CT}\ket{+}_A.
\end{equation}
A copious amount of algebra shows that this state is %(up to normalization) 
equivalent to
\begin{equation}
\begin{aligned}
    &(1+c)(1+bX_T+aZ_C-abX_TZ_c)\ket{\psi}_{CT}\ket{0}_A\\
    &+(1-c)(1+bX_T-aZ_C+abX_TZ_C)\ket{\psi}_{CT}\ket{0}_A.
\end{aligned}
\end{equation}
Noting that $\cnot_{CT}=\frac{1}{2}(1+Z_C+X_T-Z_CX_T)$, we see that the different possible measurement outcomes produce the $\cnot$ gate up to a possible correction. Table~\ref{tab:correction-cnot}, lists the corrections needed to obtain the desired logical $\cnot$ gate.

\begin{table}[]
    \centering
    \begin{tabular}{|c|c|c|c|}
    \hline
    a & b & c & Correction \\
    \hline
    $+1$ & $+1$ & $+1$ & - \\
    $+1$ & $+1$ & $-1$ & $X_T$ \\
    $+1$ & $-1$ & $+1$ & $Z_C$ \\
    $+1$ & $-1$ & $-1$ & $X_TZ_C$ \\
    $-1$ & $+1$ & $+1$ & $X_T$ \\
    $-1$ & $+1$ & $-1$ & - \\
    $-1$ & $-1$ & $+1$ & $X_TZ_C$ \\
    $-1$ & $-1$ & $-1$ & $Z_C$ \\
    \hline
\end{tabular}
    \caption{Correction table for logical $\cnot$ gate.}
    \label{tab:correction-cnot}
\end{table}

This procedure is independent of the choice of group and representation. Indeed, on a logical-qubit level, there is no reference to either of these objects. They become relevant only when performing measurements on a physical level, in which case the logical $X$ and $Z$ operations need to be computed based on the choice of group representation. Section~\ref{sec:alternative-sym} gives an example. Labeling our basis elements $\ket{0}_L=\ket{00}$ and $\ket{1}_L=\frac{1}{\sqrt{3}}(\ket{01}+\ket{10}+\ket{11})$, it can be checked that suitable logical Pauli operations are
\begin{align}
    X_L&=\frac{1}{2\sqrt{3}}(X_1+X_2+X_1X_2+X_1Z_2+Z_1X_2-Y_1Y_2)\\
    Z_L&=\frac12(Z_1+Z_2+Z_1Z_2-1).
\end{align}

\subsection{Rate Estimates}
The number of physical qubits in the code defined by the finite-dimensional representation $W:G\to U(\mathcal{H})$ is $n=\log_2(\dim(\mathcal{H}))$. Since our code space is the $G$-symmetric subspace, the maximum number of logical qubits is $\lfloor\log_2(\dim(\sym_G))\rfloor$. Thus, the rate is bounded above by
\begin{equation}\label{eq:rate-bound}
    R\le\frac{\lfloor\log_2(\dim(\sym_G))\rfloor}{\log_2(\dim(\mathcal{H}))}.
\end{equation}
For the tensor permutation representation of the symmetric group $S_n$, the symmetric subspace is spanned by the symmetric Dicke states of weights up to $n$. There are $n+1$ such states, so the rate of this code is $R=\lfloor\log_2(n+1)\rfloor/n$. 

In general, the dimension of the symmetric subspace given a permutation representation $W$ of a finite group $G$ is $Z_G(d,\ldots,d)$, where $d$ is the number of qubits acted on by $G$ and $Z_G$ is the cycle index polynomial of $G$, which has appeared as a combinatorial tool in a number of recent quantum information theoretic works \cite{bradshaw2023,bradshaw2023thesis,bradshaw2023zeta,bradshaw2025moments}. This result is a direct consequence of Polya's enumeration theorem \cite{polya1937,bradshaw2024}. 

With complete generality, the dimension of the symmetric subspace is given by the rank of the corresponding projector or, equivalently, its trace
\begin{align}
    \dim(\sym_G)&=\tr[\Pi_G]\\
    &=\frac{1}{|G|}\sum_{g\in G}\tr[W(g)]\\
    &=\frac{1}{|G|}\sum_{g\in G}\chi_W(g).
\end{align}
In other words, the dimension of the symmetric subspace is the average value of the character of the chosen representation over the represented group. Letting $W=\bigoplus_{\lambda}\rho_\lambda^{\oplus m_\lambda}$, we have $\chi_W=\sum_{\lambda}m_\lambda\chi_\lambda$ so that
\begin{align}
    \dim(\sym_G)&=\frac{1}{|G|}\sum_{g\in G}\sum_{\lambda}m_\lambda\chi_\lambda(g)\\
    &=\frac{1}{|G|}\sum_{\lambda}m_\lambda\sum_{g\in G}\chi_\lambda(g).
\end{align}
Imposing the character orthogonality relations, we see that the only term in the sum over $\lambda$ that doesn't vanish is the one corresponding to the trivial representation, and it follows that
\begin{align}
    \dim(\sym_G)&=m_{\textnormal{triv}},
\end{align}
where $m_{\textnormal{triv}}$ is the multiplicity of the trivial representation in $W$. Since the symmetric subspace is the isotypic component corresponding to the trivial representation, this is indeed what we expect. In terms of the multiplicities and dimensions of each irrep, the bound \eqref{eq:rate-bound} therefore becomes
\begin{equation}
    R\le\frac{\lfloor\log_2(m_{\textnormal{triv}})\rfloor}{\log_2(\sum_\lambda m_\lambda d_\lambda)}.
\end{equation}

\section{Examples}\label{sec:examples}

\subsection{Stabilizer Codes}\label{sec:stabilizer-codes}
We will now show that all stabilizer codes are $G=\mathbb Z_2^{\oplus n}$ codes for some $n$. Indeed, the generators $S_1,\ldots,S_n$ of the stabilizer subgroup $\mathcal{S}$ are chosen to be elements of the Pauli group which square to the identity and commute with one another. Thus, there is an isomorphism $W:\mathbb Z_2^{\oplus n}\to\mathcal{S}$ given by $W(k_1,\ldots,k_n)=\prod_{i=1}^nS_i^{k_i}$. The code space is taken to be the space of all states fixed by the action of the stabilizer subgroup, which is the symmetric subspace $\sym_{\mathbb Z_2^{\oplus n}}$. Here, we have taken the action of $\mathbb Z_2^{\oplus n}$ to be induced by the isomorphism $W$.

We claim that the usual syndrome extraction procedure given by performing a Hadamard test for each generator of $\mathcal S$ is identical to our procedure for identifying the isotypic subspace to which a state belongs. Since $\mathbb Z_2^{\oplus n}$ is abelian, its irreducible representations are 1-dimensional and identical to the corresponding character. Specifically, they are given by $\chi_j(k)=(-1)^{\sum_{i}j_ik_i}$ for all $j,k\in\mathbb Z_2^{\oplus n}$. There are $2^n$ such choices of $j$, and the number of distinct irreducible representations of a finite group (over $\mathbb C$) is equivalent to the number of conjugacy classes of the group ($2^n$ for an abelian group of order $2^n$). Thus, we have written all possible irreps: $\rho_j(k)=\chi_j(k)=(-1)^{\sum_{i}j_ik_i}=(-1)^{j\cdot k}$. Our goal is therefore to show that the usual syndrome extraction procedure given by a sequence of Hadamard tests makes the projective measurement $\bra{\psi}\Pi_j \ket{\psi}_L$ on the logical state, possibly after an error. To this end, we begin with the state $\ket{\psi}_L\ket{0}^{\otimes n}_{A_1}$, and apply a Hadamard gate to the ancillary register, creating the state
\begin{equation}
    \ket{\psi}_L\frac{1}{2^{n/2}}(\ket{0}+\ket{1})^{\otimes n}=\frac{1}{2^{n/2}}\sum_{k\in\mathbb{Z}_2^{\oplus n}}\ket{\psi}_L\ket{k}_A,
\end{equation}
where $\ket{k}_A$ is the computational basis state $\ket{k_1\cdots k_n}$. Now, apply a series of $S_i$ operations each controlled off of the $i$-th ancillary qubit to obtain
\begin{equation}
    \frac{1}{2^{n/2}}\sum_{k\in\mathbb{Z}_2^{\oplus n}}S_1^{k_1}\cdots S_n^{k_n}\ket{\psi}_L\ket{k}_A.
\end{equation}
Next, apply a Hadamard gate to each qubit in the ancillary register again to achieve the final state 
\begin{equation}
    \frac{1}{2^n}\sum_{k\in\mathbb{Z}_2^{\oplus n}}S_1^{k_1}\cdots S_n^{k_n}\ket{\psi}_L\prod_{i=1}^n(\ket{0}+(-1)^{k_i}\ket{1}),
\end{equation}
which is equivalent to
\begin{equation}
    \frac{1}{2^n}\sum_{k\in\mathbb{Z}_2^{\oplus n}}W(k)\ket{\psi}_L\sum_{j\in\mathbb{Z}_2^{\oplus n}}(-1)^{j\cdot k}\ket{j}_A.
\end{equation}
The ancillary register now labels the irreps. Making a measurement of this register produces some outcome $j$, in which case, the resulting state of the logical register is
\begin{equation}
    \frac{1}{2^n}\sum_{k\in\mathbb{Z}_2^{\oplus n}}(-1)^{j\cdot k}W(k)\ket{\psi}_L=\Pi_j\ket{\psi}_L.
\end{equation}
Thus, the probability that outcome $j$ occurs is $\bra{\psi}\Pi_j\ket{\psi}_L$, showing that the syndrome extraction for the stabilizer code is identical to the isotypic syndrome extraction we have outlined above. Notably, the QFT over the group in this case was given by a tensor product of Hadamard gates.

\subsection{Qudit Stabilizer Codes}

A slight generalization of the stabilizer code can be produced by allowing generators which still commute but, instead of squaring to the identity, now resolve to the identity at the $d$-th power. Let $\{S_1,\ldots,S_n\}$ be a set of independent generators of an abelian subgroup of $U(\mathcal{H})$, none of which are $e^{2\pi i/d}I$, which we again call $\mathcal S$. Assume $\mathcal{H}$ is a qudit state space with $\dim(\mathcal{H})=d^N$ for some integer $N$. We then represent $\mathbb Z_d^{\oplus n}$ with $W:\mathbb Z_d^{\oplus n}\to\mathcal S$ defined by $W(k)=\prod_{i=1}^nS_i^{k_i}$, where $k=(k_1,\ldots,k_n)\in\mathbb Z_d^{\oplus n}$. The code space is taken to be the $\mathbb Z_d^{\oplus n}$-symmetric subspace $\sym_{\mathbb Z_d^{\oplus n}}$ with respect to the action induced by $W$.

We now perform isotypic syndrome extraction as outlined above. We first prepare the state \begin{equation}
    \frac{1}{d^{n/2}}\sum_{k\in \mathbb Z_d^{\oplus n}}\ket{\psi}_L\ket{k} \, , 
\end{equation}
where $\ket{k}$ is the computational basis state that has the base $d$ representation $k_n\cdots k_1$. That is, $\ket{k}$ is the $(k_1+dk_2+\cdots+d^{n-1}k_n)$-th computational basis state. We now apply a controlled $W(k)=\prod_{i=1}^nS_i^{k_i}$ operation targeting the logical register and controlled off of the ancillary register, triggering when the ancillary state is $\ket{k}$. The resulting state is
\begin{equation}
    \frac{1}{d^{n/2}}\sum_{k\in\mathbb Z_d^{\oplus n}}S_1^{k_1}\cdots S_n^{k_n}\ket{\psi}_L\ket{k},
\end{equation}
and now it remains to apply the QFT over $\mathbb Z_d^{\oplus n}$ and make a measurement with respect to the irrep basis. To accomplish this, we need to understand the irreps of $\mathbb{Z}_d^{\oplus n}$, which we note are 1-dimensional since this group is abelian. We define them by $\chi_j(k)=\omega^{j\cdot k}$, where $\omega=e^{2\pi i/d}$ is the principal $d$-th root of unity, and $j,k\in\mathbb{Z}_d^{\oplus n}$. Since there are $d^n$ such choices of $j$ and there are similarly $d^n$ conjugacy classes of $\mathbb{Z}_d^{\oplus n}$, we have found all the irreps for this group. Thus, the QFT over $\mathbb{Z}_d^{\oplus n}$ is the operator defined by
\begin{equation}
    \qft_{\mathbb{Z}_d^{\oplus n}}\ket{k} = \frac{1}{d^{n/2}}\sum_{j\in\mathbb{Z}_d^{\oplus n}}\omega^{-j\cdot k}\ket{j},
\end{equation}
which is a product of ordinary QFTs on each qudit. Applying this operation to the ancillary register produces the state
\begin{equation}
    \frac{1}{d^n}\sum_{k,j\in\mathbb{Z}_d^{\oplus n}}\omega^{-j\cdot k}S_1^{k_1}\cdots S_n^{k_n}\ket{\psi}_L\ket{j}_A.
\end{equation}
Now, measuring the ancillary register produces some label $j$, and the state of the logical register immediately after this measurement is
\begin{equation}
    \frac{1}{d^n}\sum_{k\in\mathbb{Z}_d^{\oplus n}}\chi_j(-k)W(k)\ket{\psi}_L=\Pi_j\ket{\psi}_L,
\end{equation}
where we have used the facts that $\chi_j(k)=\omega^{-j\cdot k}$ and $W(k)=S_1^{k_1}\cdots S_n^{k_n}$.

If no error occurs on the logical state before syndrome extraction, then $W(k)\ket{\psi}_L=\ket{\psi}_L$ for all $k$, and we are left with the state
\begin{equation}
     \frac{1}{d^n}\sum_{k\in\mathbb{Z}_d^{\oplus n}}\chi_j(-k)\ket{\psi}_L=\begin{cases}
         \ket{\psi}_L, & j=(0,\ldots,0)\\
         0, & j\ne (0,\ldots, 0)
     \end{cases}
\end{equation}
by the character orthogonality relations. Thus, we produce the syndrome $j=(0,\ldots,0)$ with certainty whenever an error does not occur.

More generally, if an error $E$ occurs so that the logical state is of the form $E\ket{\psi}_L$ for some $\ket{\psi}_L\in\sym_{\mathbb{Z}_d^{\oplus n}}$, then we obtain outcome $j$ with probability
\begin{equation}
    p_j=\bra{\psi}E^\dagger\Pi_jE\ket{\psi}_L,
\end{equation}
where \begin{equation}
   \Pi_j=\frac{1}{d^n}\sum_{k\in\mathbb{Z}_d^{\oplus n}}\chi_j(-k)W(k)=\frac{1}{d^n}\sum_{k\in\mathbb{Z}_d^{\oplus n}}\omega^{-j\cdot k}S_1^{k_1}\cdots S_n^{k_n}\, . 
\end{equation}
Using the latter form of the projector, we can translate our construction into the usual language of a stabilizer code. Let $\ell\in\mathbb{Z}_d^{\oplus n}$ and assume $S_iE=\omega^{\ell_i}ES_i$. This is the qudit analog of the restriction that Pauli operations must either commute or anticommute. Then we have
\begin{equation}
    \Pi_jE\ket{\psi}_L=\frac{1}{d^n}\sum_{k\in\mathbb{Z}_d^{\oplus n}}\omega^{-j\cdot k}\omega^{\ell\cdot k}\ket{\psi}_L,
\end{equation}
where we have used the fact that $S_i\ket{\psi}_L=\ket{\psi}_L$ for all $i$. Now rewriting in terms of characters and invoking the character orthogonality relations reveals that
\begin{align}\label{eq:qudit-projection}
    \Pi_jE\ket{\psi}_L&=\frac{1}{d^n}\sum_{k\in\mathbb{Z}_d^{\oplus n}}\chi_j(-k)\chi_\ell(k)\ket{\psi}_L\\
    &=\begin{cases}
        \ket{\psi}_L, & j=\ell\\
        0,&j\ne\ell.
    \end{cases}
\end{align}
Thus, for each choice of $\ell$, the logical state is rotated into the isotypic subspace with label $\ell$, and our syndrome extraction procedure predicts this label with certainty.

One reason stabilizer codes are useful is because they correct a subset of the operations in the Pauli group, which form a basis for all unitary operators acting on the physical Hilbert space. These codes work because the Pauli operators are unitary and square to the identity, allowing eigenvalues of $\pm1$ only. Thus, all elements of the Pauli group either commute or anticommute. Moreover, if the error is a superposition of elements of the Pauli group, we observe the discretization of errors, wherein the projective measurement made during syndrome extraction forces the logical state to effectively choose which element of the Pauli group it was affected by. A similar phenomenon happens in the $G=\mathbb{Z}_d^{\oplus n}$ code. Indeed, we consider a generalized Pauli group $\mathcal{P}_{d,n}$ acting on $n$-fold tensor products of $\mathbb{C}^d$. This group is generated by tensor products of the generalized Pauli matrices constructed by Sylvester in 1882 \cite{sylvester2008collected}. These $d^2$ independent unitary matrices are equivalently defined by
\begin{equation}
    \sigma_{k,j} = \sum_{m=0}^{d-1}\omega^{jm}\ket{m+k}\!\!\bra{m},
\end{equation}
where $\omega=e^{2\pi i/d}$ is the principal $d$-th root of unity. In addition to carrying a natural group structure as a matrix representation of the Weyl-Heisenberg group on a $d$-dimensional Hilbert space, these matrices carry the structure of a generalized Clifford algebra \cite{weyl1927,sylvester2008collected,cartan1993groupes,schwinger1960,santhanam1976,ablamowicz1996,kwasniewski1999} (not to be confused with the notion of a generalized Clifford algebra induced by higher order forms \cite{albertin2023,chapman2015,childs1978,garling2011,pappacena2000,revoy2014}). Indeed, they satisfy the relations $\sigma_{k,j}\sigma_{k',j'}=\omega^{k'j-kj'}\sigma_{k',j'}\sigma_{k,j}$, which follow from the fact that these matrices are generated by the clock and shift matrices. We take these latter matrices as our qudit $X$ and $Z$ gates \cite{sarkar2024}:
\begin{equation}
    X=\sigma_{1,0}=\sum_{m=0}^{d-1}\ket{m+1}\!\!\bra{m},
\end{equation}
\begin{equation}
    Z=\sigma_{0,1}=\sum_{m=0}^{d-1}\omega^{m}\ket{m}\!\!\bra{m}.
\end{equation}
Thus, the generators satisfy the identity $ZX=\omega XZ$, a relation which defines a generalized Clifford algebra analogous to the Clifford algebra induced by the anticommutation relation between the Pauli $X$ and $Z$ gates in a 2-qubit system.

Returning to our code, let us examine what happens if an error $E=\sigma_{k,j}$ occurs. If our stabilizer generators $S_i$ are selected from the generalized Pauli group $\mathcal{P}_{d,n}$, then it follows that there is an $\ell=(\ell_1,\ldots,\ell_n)\in\mathbb{Z}_d^{\oplus n}$ such that $S_iE=\omega^{\ell_i}ES_i$ for each $i$. Thus, \eqref{eq:qudit-projection} guarantees that the outcome of our syndrome measurement will be the label $\ell$ corresponding to the isotypic subspace with projector $\Pi_\ell$. We have therefore constructed a generalization of the stabilizer code to qudit systems by simply choosing a natural representation of $G=\mathbb{Z}_d^{\oplus n}$.

In qubit stabilizer codes, it is well-known that the addition of an independent stabilizer generator has the effect of halving the dimension of the code space \cite{bradshaw2025}. As we now show, there is a similar relationship for qudit stabilizer codes \cite{gheorghiu2014}.
\begin{proposition}\label{prop:dim-stab}
    Consider an $N$-qudit stabilizer code with stabilizer $\mathcal{S}=\mathbb{Z}_d^{\oplus n}$ consisting of $n$ independent generators. Then $\dim(\sym_{\mathbb{Z}_d^{\oplus n}})  |\mathbb{Z}_d^{\oplus n}| = d^N$, where d is the dimension of the Hilbert space associated to a single qudit. That is, the dimension of the code space is $d^{N-n}$.
\end{proposition}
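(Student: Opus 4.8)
The plan is to compute $\dim(\sym_{\mathbb{Z}_d^{\oplus n}})$ directly from the general character formula established earlier, namely $\dim(\sym_G)=\tfrac{1}{|G|}\sum_{g\in G}\chi_W(g)$, and to show that the sum collapses to a single surviving term. Specializing to $G=\mathbb{Z}_d^{\oplus n}$, where $|G|=d^n$ and $\chi_W(k)=\tr[W(k)]$ with $W(k)=S_1^{k_1}\cdots S_n^{k_n}$, I would write
\[
\dim(\sym_{\mathbb{Z}_d^{\oplus n}})=\frac{1}{d^n}\sum_{k\in\mathbb{Z}_d^{\oplus n}}\tr[W(k)].
\]
The $k=(0,\ldots,0)$ term contributes $\tr[I]=d^N$, so the entire result reduces to proving that every other term vanishes. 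Once that is done, the identity follows immediately: $\dim(\sym_{\mathbb{Z}_d^{\oplus n}})=d^N/d^n=d^{N-n}$, and multiplying by $|\mathbb{Z}_d^{\oplus n}|=d^n$ gives the claimed $\dim(\sym_{\mathbb{Z}_d^{\oplus n}})\,|\mathbb{Z}_d^{\oplus n}|=d^N$.

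To handle the tracelessness, I would first record the single-qudit computation $\tr[\sigma_{k,j}]=\sum_{m}\omega^{jm}\langle m\,|\,m+k\rangle=d\,\delta_{k,0}\delta_{j,0}$, so that a tensor product $\sigma_{\vec k,\vec j}=\sigma_{k_1,j_1}\otimes\cdots\otimes\sigma_{k_N,j_N}$ has trace $\prod_i d\,\delta_{k_i,0}\delta_{j_i,0}$ and is therefore traceless unless it equals the identity. Because each generator $S_i$ lies in $\mathcal{P}_{d,N}$ and products of generalized Paulis are again phases times tensor products of $\sigma_{k,j}$ (via $ZX=\omega XZ$), every $W(k)$ equals a phase times a single $\sigma_{\vec k,\vec j}$. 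Consequently $\tr[W(k)]=0$ unless $W(k)$ is a scalar multiple of the identity.

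The main obstacle, and the only step that genuinely invokes the hypotheses, is ruling out $W(k)=\lambda I$ for $k\neq 0$. Here I would use two facts. First, $W$ is an isomorphism onto $\mathcal{S}$, hence faithful, so $W(k)=I$ forces $k=0$. Second, a valid stabilizer code has a nontrivial code space, which means $\mathcal{S}$ can contain no scalar $\lambda I$ with $\lambda\neq 1$: indeed, any codeword $\ket{\psi}_L\in\sym_{\mathbb{Z}_d^{\oplus n}}$ satisfies $W(k)\ket{\psi}_L=\ket{\psi}_L$, so $W(k)=\lambda I$ would give $\lambda\ket{\psi}_L=\ket{\psi}_L$ and force $\lambda=1$, i.e. $W(k)=I$ and $k=0$. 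This is precisely the qudit analog of excluding $-I$ from the stabilizer, and it is the substantive content behind the independence assumption and the requirement that no generator equal $e^{2\pi i/d}I$. With this in hand, $W(k)$ is a scalar multiple of the identity only when $k=0$, so $\tr[W(k)]=0$ for all $k\neq 0$, completing the argument.

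As an alternative route I could argue representation-theoretically: the $d^n$ one-dimensional irreps $\chi_j(k)=\omega^{j\cdot k}$ yield a decomposition $\mathcal{H}=\bigoplus_{j}\mathcal{H}_j$ with $\sum_j m_j=d^N$, and conjugation by suitable generalized Pauli operators (which shift the irrep label $j\mapsto j+c$) permutes the isotypic components transitively, forcing all multiplicities equal to some common $m$; then $d^n m=d^N$ gives $m=d^{N-n}$, and since $\dim(\sym_{\mathbb{Z}_d^{\oplus n}})=m_{\textnormal{triv}}$ the result follows. I expect the trace route to be cleaner, since it reuses the already-established formula $\dim(\sym_G)=\tfrac{1}{|G|}\sum_g\chi_W(g)$ and avoids having to prove transitivity of the label action.
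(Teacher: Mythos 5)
Your proof is correct and takes essentially the same route as the paper's: both compute $\dim(\sym_{\mathbb{Z}_d^{\oplus n}})=\tr[\Pi]=\frac{1}{d^n}\sum_{k}\tr[W(k)]$ and observe that generalized Pauli elements are traceless unless they are scalar multiples of the identity, so only the $k=0$ term survives, giving $d^N/d^n$. If anything, your write-up is more careful than the paper's, which compresses the exclusion of nontrivial scalars $W(k)=\lambda I$ ($k\neq 0$) into the phrase ``since the $S_i$ are independent,'' whereas you make explicit that this step rests on faithfulness of $W$ plus the requirement that the stabilizer contain no nontrivial scalar (the qudit analog of excluding $-I$).
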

\begin{proof}
    The projection onto $\sym_{\mathbb{Z}_d^{\oplus n}}$ is given by
    \begin{align}
        \Pi&=\frac{1}{|\mathbb{Z}_d^{\oplus n}|}\sum_{k\in\mathbb{Z}_d^{\oplus n}}W(k)\\
        &=\frac{1}{d^n}\sum_{k\in\mathbb{Z}_d^{\oplus n}}S_1^{k_1}\cdots S_n^{k_n}
    \end{align} Since $\tr[\Pi]=\rank(\Pi)=\dim(\sym_{\mathbb{Z}_d^{\oplus n}})$, and since each $S_i$ is taken to be an element of $\mathcal{P}_{d,n}$, all of which have trace zero except multiples of the identity, we therefore have $\dim(\sym_{\mathbb{Z}_d^{\oplus n}})=\frac{d^N}{d^n}=d^{N-n}$ since the $S_i$ are independent.
\end{proof}
It follows from Proposition~\ref{prop:dim-stab} that a stabilizer code on a $d^N$-dimensional physical Hilbert space with $N-1$ independent stabilizers will represent a single logical qudit. Thus, the rate of such a code is $R=\frac{1}{N}$. The distance, however, will vary greatly depending on the chosen generators of the stabilizer subgroup.

\subsection{Mixed Dimension Stabilizer Codes}

Suppose we have a mix of qudits of many different dimensions, say qudit $i$ has dimension $d_i$. The resulting Hilbert space is $\mathcal{H}=\mathbb{C}^{d_1}\otimes\cdots\otimes\mathbb{C}^{d_n}$. We represent $G=\mathbb{Z}_{d_1}\oplus\cdots\oplus\mathbb{Z}_{d_n}$ with some choice of representation $W:G\to\mathcal{P}_{d_1}\otimes\cdots\otimes\mathcal{P}_{d_n}\subset U(\mathcal{H})$, where $\mathcal{P}_{d}$ denotes the qudit Pauli group on a single qudit. In particular, we take $W(k_1,\ldots,k_n)=S_1^{k_1}\otimes\cdots\otimes S_n^{k_n}$, where $k_i\in\mathbb{Z}_{d_i}$. We again take the code space to be the $G$-symmetric subspace $\sym_G$ with respect to the action induced by $W$. The actual form of this subspace depends on the choice of $S_i$.

We now perform isotypic syndrome extraction. First, prepare the state $\frac{1}{\sqrt{d_1\cdots d_n}}\sum_{(k_1,\ldots,k_n)\in G}\ket{\psi}_L\ket{k_1\cdots k_n}$, where $\ket{\psi}_L$ is the state of the logical register before syndrome extraction. Now, apply a controlled $W(k_1,\ldots,k_n)=S_1^{k_1}\otimes\cdots\otimes S_n^{k_n}$ operation targeting the logical register and controlled off of the ancillary register and triggered when the ancillary state is $\ket{k_1\cdots k_n}$. This results in the state
\begin{equation}
    \frac{1}{\sqrt{d_1\cdots d_n}}\sum_{(k_1,\ldots, k_n)\in G}S_1^{k_1}\otimes\cdots\otimes S_n^{k_n}\ket{\psi}_L\ket{k_1\cdots k_n}.
\end{equation}
The next step is to apply the QFT over $G$. This group is again abelian and the irreps are given by $\chi_{j_1,\ldots,j_n}(k_1,\ldots,k_n)=\omega_{d_1}^{j_1k_1}\cdots\omega_{d_n}^{j_nk_n}$, where $\omega_d=e^{2\pi i/d}$ denotes the $d$-th principal root of unity. The QFT is therefore defined by
\begin{align}
    &\qft_{\mathbb{Z}_{d_1}\oplus\cdots\oplus\mathbb{Z}_{d_n}}\ket{k_1\cdots k_n}\notag\\
    &=\frac{1}{\sqrt{d_1\cdots d_n}}\sum_{(j_1,\ldots,j_n)\in G}\omega_{d_1}^{-j_1k_1}\cdots\omega_{d_n}^{-j_nk_n}\ket{j_1\cdots j_n}\\
    &=\prod_{i=1}^n\frac{1}{\sqrt{d_i}}\sum_{j_i\in\mathbb{Z}_{d_i}}\omega_{d_i}^{-j_ik_i}\ket{j_i}\\
    &=\qft_{\mathbb{Z}_{d_1}}\ket{k_1}\otimes\cdots\otimes\qft_{\mathbb{Z}_{d_n}}\ket{k_n}.
\end{align}
Applying this operation to the ancillary register produces the state
\begin{equation}\label{eq:mixed-qudit-post-state}
    \frac{1}{d_1\cdots d_n}\sum_{(j_1,\ldots,j_n)\in G}\sum_{(k_1,\ldots,k_n)\in G}\left(\bigotimes_{i=1}^n\omega_{d_i}^{-j_ik_i}\ket{j_i}_AS_i^{k_i}\right)\ket{\psi}_L,
\end{equation}
where we have rearranged the registers for notational convenience. Finally, measuring the ancillary register produces a label $j_1\cdots j_n$, and the state of the logical register immediately after this measurement is $\Pi_{j_1\cdots j_n}\ket{\psi}_L$.

Denote by $X_i$ and $Z_i$ the generalized Pauli $X$ and $Z$ gates on qudit $i$ of the ancillary register. Thus, $X_i$ and $Z_i$ are $d_i\times d_i$ unitary matrices for each $i=1,\ldots,n$. When no errors occurs, \eqref{eq:mixed-qudit-post-state} becomes
\begin{equation}
    \frac{1}{d_1\cdots d_n}\sum_{(j_1,\ldots,j_n)\in G}\sum_{(k_1,\ldots,k_n)\in G}\left(\bigotimes_{i=1}^n\omega_{d_i}^{-j_ik_i}\ket{j_i}_A\right)\ket{\psi}_L,
\end{equation}
and a measurement of the ancillary register produces the state
\begin{equation}
    \frac{1}{d_1\cdots d_n}\sum_{(k_1,\ldots,k_n)\in G}\left(\bigotimes_{i=1}^n\omega_{d_i}^{-j_ik_i}\ket{j_i}_A\right)\ket{\psi}_L,
\end{equation}
for some label $j_1\cdots j_n$. By the character orthogonality relations, this is the zero state whenever there is an $i$ such that $j_i\ne0$. Otherwise, the logical state is recovered with no error. Now suppose an $X_m$ error has been applied to the logical state before syndrome extraction. Then $S_mX_m=\omega_{d_m}^{\ell_m}X_mS_m$, and it follows that the state immediately before a measurement of the ancillary register is
\begin{equation}
    \frac{1}{d_1\cdots d_n}\sum_{\substack{(j_1,\ldots,j_n)\in G\\(k_1,\ldots,k_n)\in G}}\omega_{d_m}^{\ell_mk_m}\left(\bigotimes_{i=1}^n\omega_{d_i}^{-j_ik_i}\ket{j_i}_A\right)X_i\ket{\psi}_L.
\end{equation}
Thus, from the character orthogonality relations, a measurement of the ancillary register produces the syndrome $(0,\ldots,0,\ell_m,0,\ldots,0)$ with certainty, and the state immediately after this measurement is $\ket{0\cdots0\ \ell_m\ 0\cdots0}_AX_m\ket{\psi}_L$. We may therefore correct the Pauli-$X$ error on the $m$-th qudit by applying another Pauli-$X$ to the same qudit. This process can be repeated with any other tensor product of Pauli errors, and there are $d_1\cdots d_n$ possible syndromes, including the all zero syndrome indicative of no error occurring. The code therefore distinguishes $d_1\cdots d_n-1$ nontrivial syndromes.

\subsection{Symmetric Group Permutation Representation}\label{sec:ex_S3}
Let us demonstrate our general construction with the simplest nonabelian group, $G=S_3$, acting on three qubits by permuting tensor factors. The physical Hilbert space is $\mathcal H=(\mathbb C^2)^{\otimes 3}$ with computational basis $\{\ket{ijk}\}_{i,j,k=0}^1$.

The irreducible representations of $S_3$ (over $\mathbb{C}$) are the 1-dimensional trivial representation defined by $\rho_{\mathrm{triv}}(\sigma)=1$, the 1-dimensional sign representation $\rho_{\mathrm{sgn}}(\sigma)=\mathrm{sgn}(\sigma)$, and a 2-dimensional representation called the standard representation. This latter representation is defined similarly to the tensor permutation representation, except we permute the basis of the Hilbert space rather than the tensor components. This basis permutation representation is not itself irreducible because it contains a factor of the trivial representation, but by restricting this representation to the quotient of the total Hilbert space $(\mathbb{C}^2)^{\otimes 3}$ by the invariant subspace $\langle \sum_{i,j,k=0}^1\ket{ijk}\rangle$, we obtain the standard representation. For the 1-dimensional representations, the characters are precisely the representation itself. The character of the standard representation maps the identity element to $2$, transpositions to $0$, and 3-cycles to $-1$. Using Lemma~\ref{lemma:lambda-projection}, the isotypic projection operators are
\begin{equation}
    \Pi_{\mathrm{triv}}=\frac{1}{6}\sum_{\sigma\in S_3}W(\sigma)
\end{equation}
for the trivial representation,
\begin{equation}
    \Pi_{\mathrm{sgn}} = \frac{1}{6}\sum_{\sigma\in S_3}\mathrm{sgn}(\sigma^{-1})W(\sigma),
\end{equation}
for the sign representation, and
\begin{equation}
    \Pi_{\mathrm{std}} = \frac{1}{3}\sum_{\sigma\in S_3}\chi_{\mathrm{std}}(\sigma^{-1})W(\sigma)
\end{equation}
for the standard representation.

For the tensor permutation action of $S_3$ on three qubits, we find from the character orthogonality relations that the $8$-dimensional space decomposes as
\begin{equation}
\mathcal H = \mathcal{H}_{\mathrm{triv}}^{\oplus 4} \oplus\mathcal{H}_{\mathrm{std}}^{\oplus2}.
\end{equation}
Thus, the logical register is 4-dimensional and therefore represents two logical qubits. A convenient basis for the code space is the set of symmetric Dicke states $\{\ket{000},\frac{1}{\sqrt{3}}(\ket{100}+\ket{010}+\ket{001}),\frac{1}{\sqrt{3}}(\ket{110}+\ket{101}+\ket{011}),\ket{111}\}$, and a basis for the standard isotypic subspace is given by $\{\frac{1}{\sqrt{2}}(\ket{100}-\ket{010}),\frac{1}{2}(\ket{100}+\ket{010}-2\ket{001}),\frac{1}{\sqrt{2}}(\ket{110}-\ket{101}),\frac{1}{2}(\ket{110}+\ket{101}-2\ket{011})\}$.

As an example, choose a logical codeword in the symmetric subspace; for simplicity, take the Dicke state
\begin{equation}
\ket{\psi}_L=\ket{000}.
\end{equation}
Consider first the error $E=X_0$, after which the logical state is
\begin{equation}
E\ket{\psi}_L = X_0\ket{000} = \ket{001}.
\end{equation}
We must measure the isotypic projector labels i.e. the syndrome. The probability of obtaining label $\lambda$ is
\begin{equation}
p_\lambda = \bra{\psi} E^\dagger \Pi_\lambda E \ket{\psi}_L.
\end{equation}
Substituting in the error $E=X_0$ produces probabilities
\begin{equation}
p_{\mathrm{triv}} = \frac{1}{3},\qquad
p_{\mathrm{std}} = \frac{2}{3},\qquad
p_{\mathrm{sgn}} = 0.
\end{equation}
Thus the single-qubit $X$ error on qubit $0$ maps the logical state partly out of the code-space: with probability $1/3$, the syndrome readout returns the trivial label and thus no detectable leakage, and with probability $2/3$, it returns the standard irrep label, a detectable symmetry-breaking error which projects the state into an isotypic component carrying the 2-dimensional standard representation. If the trivial representation is obtained, the state of the logical register after measurement is the $W$-state $\frac{1}{\sqrt{3}}(\ket{100}+\ket{010}+\ket{001})$, and so an undetectable error has occurred. If the standard representation is obtained, then the state after measurement is $\sqrt{\frac{2}{3}}\ket{001}-\frac{1}{\sqrt{6}}\ket{010}-\frac{1}{\sqrt{6}}\ket{100}$. The same computation can be repeated for any logical $\ket{\psi}_L\in\sym_{S_3}$ and any error operator $E$. This example shows that nonabelian codes beyond the Pauli group are not built to protect against Pauli-type errors. In order to understand the effectiveness of such a code, one must establish a basis for the unitary group on the physical Hilbert space which rotates logical states into single isotypic components. The standard notion of distance as the smallest number of nontrivial Pauli factors in a logical operation does not capture the utility of a nonabelian code, as we discussed in Section~\ref{sec:logical-ops}.

The primary advantage of the current code is that the code space is invariant under any permutation of the qubits. In architectures that lack all-to-all connectivity, such as superconducting transmons, a $\swapop$ overhead is introduced during transpilation. We can encode our logical qubits so that such overhead does not affect the logical state and is therefore unnecessary. In the event that qubits are erroneously permuted, the state is again unharmed due to the passive protection provided by the symmetric subspace. In addition to this passive protection, the isotypic syndrome extraction procedure provides active protection against other sources of error.

This code generalizes beautifully to a code on $n$-physical qubits. Unfortunately, such a code is unlikely to be useful because it requires the use of a QFT over the symmetric group. Although Beals \cite{beals1997} showed that the QFT over the symmetric group can be computed in time polynomial in $n$ using the Young-Yamanouchi basis, this scaling is exponential in the number of qubits required to represent a permutation. Hence, no efficient (i.e., polynomial in $\log(|S_n|)$) quantum circuit implementation is known. Nevertheless, it is a good exercise to understand how the general code would work in principle.

The action of the symmetric group on a tensor product is complemented by the collective unitary action of the unitary group $U(2)$ in the sense that they commute. That is, with $W(\sigma)\ket{\psi_1}\otimes\cdots\otimes\ket{\psi_n}=\ket{\psi_{\sigma^{-1}(n)}}\otimes\cdots\otimes\ket{\psi_{\sigma^{-1}(n)}}$ and with $V(U)\ket{\psi_1}\otimes\cdots\otimes\ket{\psi_n}=U\ket{\psi_1}\otimes\cdots\otimes U\ket{\psi_n}$, we have
\begin{equation}
    W(\sigma)V(U)=V(U)W(\sigma)
\end{equation}
for all $\sigma\in S_n$ and $U\in U(2)$. Schur-Weyl duality \cite{fulton2013,harrow2013} is a consequence of this fact and the double centralizer theorem, producing the canonical decomposition of the physical Hilbert space
\begin{equation}
    \mathcal{H}=\bigoplus_{\substack{\lambda\vdash n\\\ell(\lambda)\le2}}S^\lambda(\mathbb{C}^2)\otimes M^\lambda,
\end{equation}
where the direct sum is over the partitions of $n$ with length at most 2, $S^\lambda(\mathbb{C}^2)$ is the irreducible $U(2)$ module with label $\lambda$, and $M^\lambda$ is the corresponding irreducible $S_n$-module (Specht modules). For each $\lambda$, the subspace $S^\lambda(\mathbb{C}^2)\otimes M^\lambda$ is an isotypic component. The space $M^\lambda$ is the invariant subspace of the irrep of $S_n$ with label $\lambda$ and $S^\lambda(\mathbb{C}^2)$ is the multiplicity space. Now performing isotypic syndrome extraction amounts to finding the integer partition labeling the isotypic component to which the state belongs.

In the language of angular momentum theory \cite{sakurai2017modern}, this decomposition coincides with the standard addition of spin for $n$ spin-$\tfrac{1}{2}$ particles. Each partition $\lambda = (n-j,j)$ corresponds to a definite total angular momentum $J = \tfrac{n}{2} - j$, and the multiplicity space $S^{(n-j,j)}(\mathbb{C}^2)$ carries the $(2J+1)$-dimensional irreducible representation of $SU(2)$ associated with that total spin. The accompanying invariant space $M^{(n-j,j)}$ encodes the various ways the individual spins can be coupled to yield the same total $J$. Thus, measuring the isotypic component labeled by $\lambda$ is equivalent to performing a collective measurement of total spin. In this picture, isotypic syndrome extraction is nothing more than a symmetry-resolved spin measurement, partitioning the Hilbert space into distinct total-spin sectors that serve as natural error-syndrome subspaces.

\subsection{Alternative Symmetric Group Representation}\label{sec:alternative-sym}
To showcase how the code depends greatly on the chosen representation, consider again the case $G=S_3$, but this time let $\tau=(1\ 2)$ and $\eta=(1\ 2\ 3)$ denote the generators of $S_3$ and define a (faithful) 4-dimensional representation by $W(\tau)=\cnot$ and $W(\eta)=\cnot\cdot\swapop$. Since $\cnot$ and $\cnot\cdot\swapop$ are permutation matrices, every element of $W(S_3)$ is a permutation matrix. That is, each row and column contains a single 1 with all remaining elements equal to 0. Notice also that both $\tau$ and $\eta$ fix the first column/row and permute the remainder. The character of each $\sigma\in S_3$ is therefore given by the number of symbols fixed by $\sigma$ plus one. There are three transpositions, an identity element, and two $3$-cycles in $S_3$. Thus, the character table contains three elements with value 2, one element with value 4, and two elements with value 1. Now computing the number of copies of each irrep in $W$, we have
\begin{equation}
    \langle\chi_W,\chi_{\textnormal{triv}}\rangle=2
\end{equation}
\begin{equation}
    \langle\chi_W,\chi_{\textnormal{sgn}}\rangle=0
\end{equation}
\begin{equation}
    \langle\chi_W,\chi_{\textnormal{std}}\rangle=1.
\end{equation}
Thus, our representation reduces to $W=\rho_{\textnormal{triv}}^{\oplus 2}\oplus\rho_{\textnormal{std}}$.

The code space, which is the space of all $S_3$-symmetric states with respect to the representation $W$, is 2-dimensional and therefore represents a single qubit. A convenient basis for this space is $\{\ket{00},\frac{1}{\sqrt{3}}(\ket{01}+\ket{10}+\ket{11})\}$. For simplicity, consider a code word $\ket{00}$. Applying an $X_1$ error produces the state $\ket{10}$. Now performing isotypic syndrome extraction, we find that the probabilities corresponding to each irrep are
\begin{equation}
    p_{\textnormal{triv}}=\frac13,
    \qquad
    p_{\textnormal{std}}=\frac23,\qquad p_{\textnormal{sgn}}=0.
\end{equation}
Like in the permutation code, this code is therefore not set up to protect specifically against Pauli-$X$ errors.

Our symmetric codes are best prepared to correct errors which rotate a logical state into precisely one nontrivial isotypic subspace so that a recovery operation can be applied depending on this label, correcting the error every time. In the case of the current code, there is only one non-trivial isotypic subspace, and it has a basis $\{\ket{w_1}=\frac{1}{\sqrt{2}}(\ket{01}-\ket{11}),\ket{w_2}=\frac{1}{\sqrt{2}}(\ket{10}-\ket{11})\}$. An example of an operator that this code can correct is $U=Z_1X_0\cnot_{01}H_0$. Indeed, this operator maps the code word $\alpha\ket{00}+\frac{\beta}{\sqrt{3}}(\ket{01}+\ket{10}+\ket{11})$ to the state $\frac{\alpha}{\sqrt{2}}(\ket{10}-\ket{01})+\frac{\beta}{\sqrt{6}}(\ket{10}+\ket{01}-2\ket{11})=\alpha(\ket{w_2}-\ket{w_1})+\frac{\beta}{\sqrt{3}}(\ket{w_1}+\ket{w_2})$, which is equivalent to the state
\begin{equation}
    \left(\frac{\beta}{\sqrt{3}}-\alpha\right)\ket{w_1}+\left(\alpha+\frac{\beta}{\sqrt{3}}\right)\ket{w_2},
\end{equation}
and is clearly in the non-trivial isotypic component. Thus, if we wish to correct the error $U$ specifically, we would decode the non-trivial $\lambda$ syndrome by applying $U^\dagger$.

\subsection{One Logical Qubit and the Dihedral Group}\label{sec:one-logical}
As pointed out in Section~\ref{sec:two-logical}, in the context of fault tolerance, we would prefer to protect logical qubits with independent codes. Let us therefore build a technique for constructing such codes. Let $G$ be a finite group that acts on a finite set $\mathscr{A}$. There is a permutation representation induced by this action. Indeed, the action on $\mathscr{A}$ induces a unitary representation $W$ on the formal Hilbert space $\mathbb{C}[\mathscr{A}]=\{e_a:a\in \mathscr{A}\}$ defined by $W(g) e_{a}=e_{g\cdot a}$. Since $W$ is completely reducible, $\mathbb{C}[\mathscr{A}]$ decomposes as
\begin{equation}
    \mathbb{C}[\mathscr{A}]=\bigoplus_{\lambda}\mathcal{H}_\lambda^{\oplus m_\lambda},
\end{equation}
where $\mathcal{H}_\lambda$ is the invariant subspace of the irrep with label $\lambda$ and $m_\lambda$ is the multiplicity of the same irrep in $W$. The multiplicity of the trivial representation determines the number of logical qubits and is given by the character orthogonality relation
\begin{equation}
    m_{\textnormal{triv}}=\langle\chi_W,\chi_{\textnormal{triv}}\rangle=\frac{1}{|G|}\sum_{g\in G}\chi_W(g).
\end{equation}
The character of the $W$ representation is given by the number of $a\in \mathscr{A}$ fixed by $W(g)$. Thus,
\begin{equation}
    m_{\textnormal{triv}}=\frac{1}{|G|}\sum_{g\in G}\lvert\{a\in \mathscr{A}:g\cdot a=a\}\rvert,
\end{equation}
but by Burnside's lemma \cite{burnside1897}, this is precisely the number of orbits in $\mathscr{A}/G$, the number of equivalence classes of elements of $\mathscr{A}$ modulo the action of $G$. It follows that
\begin{equation}
    m_{\textnormal{triv}}=\lvert \mathscr{A}/G\rvert.
\end{equation}

Each orbit corresponds to one invariant vector in $\mathbb{C}[\mathscr{A}]$, and the symmetric subspace $\sym_G$ is spanned by the uniform superpositions
\begin{equation}
    \ket{\psi_i}=\frac{1}{\sqrt{|\mathcal{O}_i|}}\sum_{a\in\mathcal{O}_i}\ket{a},
\end{equation}
where each $\mathcal{O}_i$ is an orbit belonging to $\mathscr{A}/G$. Thus, in order to build a code for a single logical qubit, it suffices to choose $\mathscr{A}$ and $G$ so that there are only two orbits. The basis for the symmetric subspace is then given by the uniform superpositions over the elements of each orbit. One way to accomplish this is by considering the action of $G$ on a disjoint union of sets for which it acts transitively. Indeed, if $G$ acts transitively on $\mathscr{A}_1$ via some representation $W_1$ and on $\mathscr{A}_2$ via some representation $W_2$, then there is a natural representation $W:G\to \mathcal{U}(\mathbb{C}[\mathscr{A}_1\sqcup \mathscr{A}_2])$, and the quotient defined by this action has size $\lvert \mathscr{A}_1\sqcup \mathscr{A}_2/G\rvert=2$ with orbits $\mathscr{A}_1$ and $\mathscr{A}_2$.

In Section~\ref{sec:ex_S3}, we remarked that the symmetric group code for large $n$ is intractable due to the complexity of the corresponding QFT. This is largely due to the exponential blow up of the irrep dimensions as $n$ grows. As the dihedral group $D_n$ is isomorphic to a semidirect product of cyclic groups, the corresponding QFT for this group is comparatively easy to implement. This line of reasoning indicates that it might be worthwhile to study an example involving this group.

To be precise, the dihedral group $D_n$ can be written as a semidirect product
\begin{equation}
    D_n \cong \mathbb{Z}_n \rtimes \mathbb{Z}_2,
\end{equation}
where $\mathbb{Z}_n$ represents the $n$ rotations and $\mathbb{Z}_2$ the reflection. This group has $2n$ elements and its irreducible representations consist of two one-dimensional irreps when $n$ is odd (four when $n$ is even), and $(n-1)/2$ two-dimensional irreps when $n$ is odd (and $n/2 - 1$ when $n$ is even). Thus, all irreps of $D_n$ have dimension at most two, in contrast with the factorial growth of the irrep dimensions of $S_n$.

An efficient QFT over $D_n$ exploits this simple structure. It proceeds by first performing a QFT over the abelian subgroup $\mathbb{Z}_n$, followed by a short sequence of controlled operations that yields an overall circuit complexity of $O(\log^2(n))$ gates \cite{hoyer1997,puschel1999,childs2010}. The efficiency of this construction ultimately derives from the fact that both the subgroup $\mathbb{Z}_n$ and the quotient $\mathbb{Z}_2$ are abelian, allowing the semidirect product structure to be exploited algorithmically. We work out the case of $D_{2^m}$ explicitly in Section~\ref{sec:qft-dn}.

The relative simplicity of the dihedral group makes it a natural next step for constructing symmetry-based codes beyond the stabilizer codes. A dihedral code would encode logical states into invariant subspaces corresponding to the irreducible representations of $D_n$, with isotypic syndrome extraction implemented using the efficient QFT over $D_n$. Such codes could in principle retain some of the desirable symmetry properties of the symmetric-group construction while remaining physically realizable for larger system sizes.

There is a natural choice for $\mathscr{A}_1$ and $\mathscr{A}_2$ when $G=D_n$. We take $\mathscr{A}_1=\{\ket{i}\ket{0}\}_{i=0}^{n-1}$ to be the labels of the vertices which get permuted by the elements of $D_n$, and we take $\mathscr{A}_2=\{\ket{i}\ket{1}\}_{i=0}^{n-1}$ to be the labels of the edges which are also permuted by the elements of $D_n$. In fact, $D_n$ acts transitively on both sets via $W_j(\sigma)\ket{i}\ket{j}=\ket{\sigma(i)}\ket{j}$, where $\sigma$ is the permutation associated to an element of $D_n$. Thus, there is a unitary representation $W:D_n\to\mathcal{U}(\mathbb{C}[\mathscr{A}_1\sqcup \mathscr{A}_2])$, and this representation defines a code with a two dimensional code space with basis $\{\frac{1}{\sqrt{n}}(\ket{0}+\cdots +\ket{n-1})\ket{0},\frac{1}{\sqrt{n}}(\ket{0}+\cdots+\ket{n-1})\ket{1}\}$. 

Let us count the number of available syndromes. Since $W_0$ and $W_1$ are isomorphic representations, it suffices to decompose $W_0$ into irreducible representations. We may further simplify the problem by noting that $W_0$ is isomorphic to the representation $V$ obtained by dropping the extra $\ket{0}$ factor, so that vertices of the $n$-gon are labeled by $\ket{i}$ and $V(\sigma)\ket{i}=\ket{\sigma(i)}$. Observe that
\begin{equation}
    V(\sigma)\frac{1}{\sqrt{n}}\sum_{i=0}^{n-1}\ket{i}=\frac{1}{\sqrt{n}}\sum_{i=0}^{n-1}\ket{i}
\end{equation}
for all $\sigma\in D_n$, where we are again identifying $D_n$ with its natural image in $S_n$. The one dimensional subspace spanned by $\frac{1}{\sqrt{n}}\sum_{i=0}^{n-1}\ket{i}$ is the trivial isotypic component (the symmetric subspace), and it remains to identify a decomposition of the remaining $(n-1)$-dimensional subspace. To this end, let us make a change of basis to the Fourier basis defined by
\begin{equation}
    \ket{\widehat{k}}=\frac{1}{\sqrt{n}}\sum_{j=0}^{n-1}\omega^{jk}\ket{j},
\end{equation}
where $\omega=e^{2\pi i/n}$ is the principal $n$-th root of unity. For $k=0$, we recover the state spanning the invariant subspace of the trivial representation. For $1\le k\le n-1$, consider the action of a rotation and a flip. For the rotation $r$, we have
\begin{equation}
    V(r)\ket{\widehat{k}}=\omega^k\ket{\widehat{k}},
\end{equation}
and for a flip $f$, we see that
\begin{equation}
    V(f)\ket{\widehat{k}}=\ket{\widehat{n-k}}.
\end{equation}
It follows that as long as $k\ne n-k$, the invariant subspace spanned by $\{\ket{\widehat{k}},\ket{\widehat{n-k}}\}$ is 2-dimensional and (in the Fourier basis) carries the 2-dimensional irreducible representation defined by
\begin{equation}\label{eq:dihedral-irreps}
    r\mapsto\begin{pmatrix}
        \omega^k&0\\0&\omega^{-k}
    \end{pmatrix},\;\;
    f\mapsto\begin{pmatrix}
        0&1\\1&0
    \end{pmatrix}.
\end{equation}
Now, $k=n-k$ can only hold when $n=2k$ is even. Thus, for odd $n$, $V$ decomposes as one copy of the trivial irrep plus $\frac{n-1}{2}$ distinct 2-dimensional irreps for a total of $\frac{n+1}{2}$ isotypic components. When $n$ is even, we pair $\ket{\widehat{k}}$ with $\ket{\widehat{n-k}}$ except for the case $k=\frac{n}{2}$, where $k=n-k$. In this case, $\ket{\widehat{k}}=\ket{\widehat{n-k}}=\ket{\widehat{n/2}}=\frac{1}{\sqrt{n}}\sum_{j=0}^{n-1}(-1)^{j}\ket{j}$, and it follows that a rotation multiplies the vector by $-1$ and a flip leaves the vector invariant. Thus, $\ket{\widehat{n/2}}$ spans a 1-dimensional invariant subspace carrying the irreducible representation defined by $r\mapsto-1$, $f\mapsto1$. It follows that, for even $n$, $V$ decomposes as one copy of the trivial irrep plus one copy of the above 1-dimensional irrep, plus $\frac{n-2}{2}$ distinct 2-dimensional irreps for a total of $\frac{n+2}{2}$ isotypic components.

Since $W$ is isomorphic to two copies of $V$, $W$ has the same number of isotypic components as $V$. As there is a syndrome associated to every isotypic component, the code defined by $W$ has $\frac{n+1}{2}$ available syndromes when $n$ is odd and $\frac{n+2}{2}$ when $n$ is even. Since one of these syndromes is associated to the code space, the code distinguishes
\begin{equation}
    \begin{cases}
        \frac{n-1}{2},&\textnormal{n odd}\\
        \frac{n}{2},&\textnormal{n even}
    \end{cases}
\end{equation}
nontrivial syndromes. In order to obtain $n$ basis states, we need $\lceil\log_2(n)\rceil$ qubits, and so the rate of the code is $R=\frac{1}{\lceil\log_2(n)\rceil}$.

Recall that the centralizer contains logical operations which do not permute the nontrivial isotypic components. The structure of this subgroup is obtained easily from Schur's lemma. Since every irrep in the decomposition of a single copy of the permutation representation appears with multiplicity one, taking two copies means that each irrep now appears with multiplicity two. By Schur's lemma, the commutant of $W(D_n)$ on the isotypic component labeled by $\lambda$ is therefore isomorphic to the full unitary group $U(2)$ acting on the multiplicity space. Denoting the corresponding unitary block by $U_\lambda$, we see that the matrix representation of any element of the centralizer (in the Fourier basis) has the block diagonal form $\textnormal{diag}(U_0,U_{n/2},U_1\otimes I_2,\ldots,U_{(n-1)/2}\otimes I_2)$ for $n$ even and $\textnormal{diag}(U_0,U_1\otimes I_2,\ldots,U_{n/2}\otimes I_2)$ for $n$ odd. Equivalently, the centralizer is isomorphic to $U(2)\times\cdots\times U(2)$, one copy of $U(2)$ for each distinct irrep of $D_n$ appearing in the decomposition. For even $n$, note that the one-dimensional representations corresponding to $k=0$ and $k=n/2$ each contribute their own $U(2)$ factor, while the remaining 2-dimensional irreps each contribute a $U(2)$ acting on their twofold multiplicity space.

The full normalizer is slightly more complicated, but its elements can be classified modulo the centralizer. Let $N_{U(\mathcal{H})}(W(D_n))$ be the normalizer and note that the kernel of the map $N_{U(\mathcal{H})}(W(D_n))\to\textnormal{Aut}(W(D_n))$ defined by $g\mapsto g(\cdot)g^\dagger$ is the centralizer $C_{U(\mathcal{H})}(W(D_n))$. It follows from the first isomorphism theorem that
\begin{equation}
    N_{U(\mathcal{H})}(W(D_n))/C_{U(\mathcal{H})}(W(D_n))\cong \textnormal{Aut}(W(D_n)).
\end{equation}
We have already determined the structure of the centralizer, and the elements of the automorphism group of $D_n$ have the form $r\mapsto r^a$ and $f\mapsto r^bf$ for some $a,b\in\mathbb{Z}_n$ with $a\ne0$. Since $W$ is faithful, we may lift this structure to the image of $D_n$ so that the automorphisms satisfy $W(r)\mapsto (W(r))^a$ and $W(f)\mapsto (W(r))^bW(f)$. Define a permutation matrix $U_{a,b}$ by its action on the Fourier basis:
\begin{equation}
    U_{a,b}\ket{\widehat{k},\alpha}=\omega^{bk}\ket{\widehat{ak},\alpha},
\end{equation}
where $\alpha=0,1$ is labeling the copy of the Fourier basis state (recall there are two). We claim that conjugation by the permutation matrix $U_{a^{-1},2^{-1}ba^{-1}}$ implements the above automorphism. Indeed, for a rotation $r$, we have
\begin{align}
    &U_{a^{-1},2^{-1}ba^{-1}}W(r)U_{a^{-1},2^{-1}ba^{-1}}^\dagger\ket{\widehat{k},\alpha}\\
    &=U_{a^{-1},2^{-1}ba^{-1}}W(r)\omega^{-2^{-1}bk}\ket{\widehat{ak},\alpha}\\
    &=\omega^{-2^{-1}bk}U_{a^{-1},2^{-1}ba^{-1}}\omega^{ak}\ket{\widehat{ak},\alpha}\\
    &=\omega^{-2^{-1}bk}\omega^{ak}\omega^{2^{-1}bk}\ket{\widehat{k},\alpha}\\
    &=\omega^{ak}\ket{\widehat{k}}=W(r^a)\ket{\widehat{k},\alpha},
\end{align}
and for a flip $f$, we have
\begin{align}
    &U_{a^{-1},2^{-1}ba^{-1}}W(f)U_{a^{-1},2^{-1}ba^{-1}}^\dagger\ket{\widehat{k},\alpha}\\
    &=U_{a^{-1},2^{-1}ba^{-1}}W(f)\omega^{-2^{-1}bk}\ket{\widehat{ak},\alpha}\\
    &=\omega^{-2^{-1}bk}U_{a^{-1},2^{-1}ba^{-1}}\ket{-\widehat{ak},\alpha}\\
    &=\omega^{-2^{-1}bk}\omega^{-2^{-1}bk}\ket{-\widehat{k},\alpha}\\
    &=\omega^{-bk}\ket{-\widehat{k},\alpha}=W(r^bf)\ket{\widehat{k},\alpha}.
\end{align}
Thus, elements of the normalizer in the Fourier basis have the form $D\cdot U^{\oplus 2}_{a,b}$, where $D$ is a block-diagonal matrix in the centralizer. (Recall that the exact form depends on the parity of $n$) 

Since the logical operations are the elements of the normalizer, we see that the action of a logical operation is to permute the isotypic component blocks and then apply a block diagonal unitary operation with one block for each isotypic component. When the logical state is uncorrupted, this permutation of the isotypic components does nothing, and we are left with an arbitrary single qubit unitary operator on the code space. Thus, the logical qubit in this construction admits a full set of logical operations, including non-Clifford rotations. However, these are not necessarily transversal operations. Their physical realization depends on the underlying hardware and on which gates preserve the encoded subspace. This observation therefore does not violate the Eastin–Knill theorem \cite{eastin2009}; it simply reflects that, at the logical level, the code supports the entire $U(2)$ algebra of transformations. 

Unfortunately, this feature is also a bug. Since $U(2)$ is a continuous group, there are infinitely many errors which can masquerade as a logical operation, thereby going undetected. 
%To be clear, this is also the case for a stabilizer code when we consider the ambient space that the cyclic group sits inside of to be the entire unitary group instead of the Pauli group. 
To be clear, this is also potentially the case for stabilizer codes. Consider letting the ambient space that cyclic group sits inside of be the entire unitary group---a natural extension of the usual Pauli group construction. Then the same problem arises.
The problem is that we have not restricted our attention to a subgroup whose group algebra spans the full operator algebra on the Hilbert space; that is, we have not taken advantage of the discretization of errors. We should therefore look for a subgroup of the unitary group analogous to the Pauli group which the image of the dihedral group naturally embeds in. The group algebra generated by this subgroup should span the full operator algebra on the code space, so that any error operator can be expanded as a linear combination of group elements. 

Let $\mathcal{Q}$ denote such a subgroup. The logical operations are then given by the normalizer $N_{\mathcal{Q}}(W(D_n))$. For the sake of clarity, let us choose $n=2^m$ so that $W(D_n)$ acts on $m+1$ qubits (recall that $W$ is two copies of the basis permutation representation). Using the isomorphism~\footnote{This isomorphism follows from $\dim(\mathbb{C}^{2^m}\oplus\mathbb{C}^{2^m})=2^m+2^m=2\cdot 2^m=\dim(\mathbb{C}^{2^m}\otimes\mathbb{C}^2)$.} 
$\mathbb{C}^{2^m}\oplus\mathbb{C}^{2^m}\cong\mathbb{C}^{2^m}\otimes\mathbb{C}^2$, we label the basis states $\ket{i,\alpha}$, where $i=0,\ldots,2^{m}-1$ and $\alpha=0,1$. Setting $\mathcal{Q}=\langle \pm i,I\otimes X,I\otimes Z, W(D_n)\rangle$, where the identity acts on $\mathbb{C}^{2^m}$ and the Pauli-$X$ and $Z$ gates act on $\mathbb{C}^2$, we see that the normalizer is
\begin{equation}
    N_{\mathcal{Q}}(W(D_n))=\mathcal{Q}
\end{equation}
since the Pauli operations act on the copy labeling register, while the representation $W$ acts on $\mathbb{C}^{2^m}$. It follows that the non-trivial logical operations are isomorphic to the single qubit Pauli group, with $I\otimes X$ taking the role of the logical $X_L$ operation and $I\otimes Z$ taking the role of the logical $Z_L$ operation. However, $\mathcal{Q}$ does not span the unitary group. To fix this, we add a diagonal matrix of phases to the list of generators. Define
\begin{equation}
    D=\textnormal{diag}(e^{i\theta_0},\ldots,e^{i\theta_{2^m-1}}),
\end{equation}
where the phases $\theta_j$ are pairwise distinct---for example, $\theta_j=\frac{2\pi j}{2^m+1}$. We claim that the addition of $D$ as a generator is all that is needed to span the full unitary group.

\begin{proposition}
    Any element of $U(\mathbb{C}^{2^m}\otimes\mathbb{C}^2)$ can be written as a linear combination of elements of the group $\mathcal{Q}=\langle \pm i,I\otimes X,I\otimes Z, W(D_n), D\rangle$.
\end{proposition}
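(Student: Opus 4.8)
The plan is to show that the complex linear span of $\mathcal{Q}$ is the full matrix algebra $M_{2^{m+1}}(\mathbb{C})$ on $\mathbb{C}^{2^m}\otimes\mathbb{C}^2$; since every unitary lies in this algebra, the proposition follows immediately. The linchpin is that, because $\mathcal{Q}$ is a \emph{group} containing the identity, its linear span is automatically a unital subalgebra: for $g,h\in\mathcal{Q}$ we have $gh\in\mathcal{Q}$, and bilinearity of multiplication then shows $\operatorname{span}_{\mathbb{C}}(\mathcal{Q})$ is closed under products. Thus it suffices to exhibit generators whose products, taken inside this algebra, already reach every matrix unit $\ket{p}\!\bra{q}$.

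I would next exploit the tensor factorization of the generators. The Pauli generators $I\otimes X$ and $I\otimes Z$ (together with $\pm i$) act as the identity on the vertex register $\mathbb{C}^{2^m}$ and generate the single-qubit Pauli group on the copy register $\mathbb{C}^2$; since $\{I,X,Z,XZ\}$ is a basis of $M_2(\mathbb{C})$, these contribute the factor $I_{2^m}\otimes M_2(\mathbb{C})$. On the other factor, recalling that $W$ is two copies of the vertex permutation representation, under the identification $\mathbb{C}^{2^m}\oplus\mathbb{C}^{2^m}\cong\mathbb{C}^{2^m}\otimes\mathbb{C}^2$ we have $W(D_n)=V\otimes I_2$, and $D=D'\otimes I_2$ acts as the diagonal matrix $D'$ on the vertex register and trivially on the copy register; so it remains to prove that $V(D_n)$ together with $D'$ generate all of $M_{2^m}(\mathbb{C})$. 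For this I would use two ingredients. First, $D'=\operatorname{diag}(e^{i\theta_0},\ldots,e^{i\theta_{2^m-1}})$ has pairwise distinct entries (the choice $\theta_j=2\pi j/(2^m+1)$ forces $e^{i\theta_j}$ distinct, since $2^m+1\nmid(j-k)$ for $0\le j\ne k\le 2^m-1$), so the Vandermonde matrix $[(e^{i\theta_j})^k]_{k,j}$ is invertible; hence $I,D',\ldots,(D')^{2^m-1}$ are linearly independent and their span is exactly the $2^m$-dimensional space of all diagonal matrices, including every projector $\ket{q}\!\bra{q}$. Second, the rotation $r$ acts as the cyclic shift $V(r)\ket{i}=\ket{i+1\bmod 2^m}$.

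Algebra closure then does the remaining work: for any $p,q$ the product $V(r)^{\,p-q}\,\ket{q}\!\bra{q}=\ket{p}\!\bra{q}$ lies in $\operatorname{span}(\mathcal{Q})$, so every matrix unit is reached and $V(D_n),D'$ generate all of $M_{2^m}(\mathbb{C})$ (the flip $V(f)$ is not even needed). Forming products of the two factors yields $M_{2^m}(\mathbb{C})\otimes M_2(\mathbb{C})=M_{2^{m+1}}(\mathbb{C})$, which completes the argument. The main obstacle, and the only step requiring genuine care, is the diagonal part: one must verify that the single matrix $D'$ recovers the entire diagonal subalgebra through its \emph{powers} rather than through any group relation, which is precisely where distinctness of the phases $\theta_j$ is essential. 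Everything else is a formal consequence of $\operatorname{span}(\mathcal{Q})$ being an algebra.
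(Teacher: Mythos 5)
Your proof is correct and follows essentially the same route as the paper's: both arguments use the distinctness of the phases $e^{i\theta_j}$ to recover every diagonal projector $\ket{q}\!\bra{q}$ (your Vandermonde argument on powers of $D'$ is equivalent to the paper's Lagrange interpolation, which is just the explicit inverse of that Vandermonde system), and then apply the permutation action of $W(D_n)$ (you via powers of the rotation, the paper via transitivity) to reach all matrix units $\ket{p}\!\bra{q}$, combining with the Pauli factor on the copy register. Your explicit remark that $\operatorname{span}_{\mathbb{C}}(\mathcal{Q})$ is automatically a unital algebra, and that the flip $V(f)$ is never needed, are nice clarifications but do not change the substance of the argument.
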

\begin{proof}
    The copy of the single qubit Pauli group generated by $\pm i,I\otimes X,I\otimes Z$ certainly spans $U(\mathbb{C}^2)$, so it remains to show that $W(D_n)$ and $D$ together span $U(\mathbb{C}^{2^m})$. To this end, recall that there is a unique polynomial $p(x)$ of degree at most $2^m-1$ such that $p(x_j)=y_j$ for pairs $(x_0,y_0),\ldots, (x_{2^m-1},y_{2^m-1})$. Indeed, this polynomial is constructed by Lagrange interpolation. Explicitly, we have
    \begin{equation}
        p(x)=\sum_{j=0}^{2^m-1}y_jL_j(x),
    \end{equation}
    where $L_j(x):=\prod_{k\ne j}\frac{x-x_k}{x_j-x_k}$ denotes the Lagrange basis polynomials. 
    
    Note that $L_j(x_k)=\delta_{jk}$ and set $y_j=1$ for some $j$ with all other $y_k=0$ for $k\ne j$. We take $x_j=e^{i\theta_j}$. Then $p(x)=L_j(x)$. Since $D$ is diagonal, $p(D)=\textnormal{diag}(p(e^{i\theta_0}),\ldots,p(e^{i\theta_{2^m-1}}))$, and so we have
    \begin{align}
        p(D)&=\textnormal{diag}(L_j(e^{i\theta_0}),\ldots,L_j(e^{i\theta_{2^m-1}}))\\
        &=\textnormal{diag}(\delta_{j,0},\ldots,\delta_{j,2^m-1})\\
        &=\ket{j}\!\!\bra{j}.
    \end{align}
    Since $D_n$ acts transitively on the set $\{0,\ldots,2^m-1\}$, there is a permutation matrix $P_{ji}$ for any $i,j$ such that $P_{ji}\ket{j}=\ket{i}$. Thus, we obtain the $\ket{i}\!\!\bra{j}$ by applying this operation. Since $\ket{i}\!\!\bra{j}$ is a basis for $M_{2^m}(\mathbb{C})$, we have that $W(D_n)$ and $D$ together span $U(\mathbb{C}^{2^m})$.
\end{proof}

With $\mathcal{Q}=\langle \pm i,I\otimes X,I\otimes Z, W(D_n), D\rangle$, the normalizer is no longer all of $\mathcal{Q}$, but it does still contain the subgroup $\langle \pm i,I\otimes X,I\otimes Z, W(D_n)\rangle$, so that the entire logical Pauli group can be implemented. The detectable errors are given by the cosets of the normalizer $N_{\mathcal{Q}}(W(D_n))$ in $\mathcal{Q}$. This construction puts the dihedral code on the same playing field as the ordinary stabilizer code. The important difference is that a stabilizer code protects against an abelian subgroup of the Pauli group passively, while the dihedral code passively protects against certain permutations of the basis states.

\section{Circuit Implementations for some Special Cases}\label{sec:qft}
%we can always go back and make things consistent later
\subsection{Discussion of $D_3$ Case}
While the mathematics developed in previous sections is sufficiently complete to construct a code, it is helpful to see explicitly how a circuit-based implementation is constructed. For this reason, this section will focus on one illustrative example, the dihedral group of the triangle, $D_3$, which is isomorphic to $S_3$. The goal here is to demonstrate a step-by-step procedure by focusing on a specific group implementation, though the methods discussed generalize to any group.

First, consider the tensor permutation representation of $D_3$. If we label a three-qubit system sequentially as $\ket{q_1q_2q_3}$ then our unitary representation is given by 
\begin{align}
    e &= 1 \\
    f &= \swapop_{13} \\
    r &= \swapop_{23}\cdot \swapop_{12} \\
    fr &= \swapop_{13} \cdot \swapop_{23}\cdot \swapop_{12}\\
    r^2 & = \swapop_{12}\cdot \swapop_{23} \\
    f r^2 & = \swapop_{13} \cdot \swapop_{12}\cdot \swapop_{23} \, ,
\end{align}
where $f$ and $r$ denote the flip and rotation operations of the dihredral group, respectively. The projection $\Pi_G$ onto the symmetric subspace $\sym_G$ given by our usual $G$-Bose symmetry test is 
\begin{equation}
    \Pi_G = \frac{1}{3} \begin{pmatrix}
        3 & 0 & 0 & 0 & 0 & 0 & 0 & 0 \\
        0 & 1 & 1 & 0 & 1 & 0 & 0 & 0 \\
        0 & 1 & 1 & 0 & 1 & 0 & 0 & 0 \\
        0 & 0 & 0 & 1 & 0 & 1 & 1 & 0 \\
        0 & 1 & 1 & 0 & 1 & 0 & 0 & 0 \\
        0 & 0 & 0 & 1 & 0 & 1 & 1 & 0 \\
        0 & 0 & 0& 1 & 0 & 1 & 1 & 0 \\
        0 & 0 & 0 & 0 & 0 & 0 & 0 & 3 \\
    \end{pmatrix} \, .
\end{equation} %maybe not necessary to include here

The next step concerns specifying the ancillary register which implements the group action on the input state. Let
\begin{equation}
    \ket{+_G}\coloneqq \frac{1}{\sqrt{|G|}}\sum_{g\in G} \ket{g}
\end{equation}
denote the desired control state for a generic group $G$. This ancilla realizes the states given in Eqs.~\eqref{eq:ancilla+input1}-\eqref{eq:ancilla+input2}. For $D_3$, we explicitly assign the group elements as 
\begin{align}
    \ket{e} &= \ket{000} \, ,\\
    \ket{f} &= \ket{001} \, ,\\
    \ket{r} &= \ket{010} \, ,\\
    \ket{rf} &= \ket{011} \, ,\\
    \ket{r^2} &= \ket{100} \, ,\\
    \ket{r^2 f} &= \ket{101} \, ,
\end{align}
such that the ancilla state becomes
\begin{equation}\label{eq:D3_superposition}
    \ket{+_{D_3}} = \frac{1}{\sqrt{6}}\left( \ket{000} +\ket{001}+ \ket{010} + \ket{011} +\ket{100} +\ket{101}\right) \, . 
\end{equation}
Generally, each control state can be written as $\ket{r^k}\otimes\ket{f^\alpha}$ where $k\in \mathbb{Z}_n$ and $\alpha \in \mathbb{Z}_2$.

Constructing the superposition in Eq.~\eqref{eq:D3_superposition} is fairly straightforward. The control state can be broken into a tensor product as $\frac{1}{\sqrt{3}}\left (\ket{00} + \ket{01} + \ket{10} \right) \otimes \frac{1}{\sqrt{2}}\left (\ket{0} + \ket{1} \right )$. The second term is generated via a single Hadamard gate, leaving only the first remaining. For convenience, let $U_i(\theta,\phi,\lambda)$ denote a unitary matrix acting on the $i$-th qubit defined by three Euler angles such that
\begin{equation}\label{eq:gen-unitary}
U_{i}(\theta,\phi,\lambda) \coloneqq
    \begin{pmatrix}
        \cos{\frac{\theta}{2}} & -e^{-i\lambda}\sin{\frac{\theta}{2}} \\
        e^{i\phi}\sin{\frac{\theta}{2}} & e^{i(\phi+\lambda)}\cos{\frac{\theta}{2}}
    \end{pmatrix} \, .
\end{equation}
Then the state $\frac{1}{\sqrt{3}}\left (\ket{00} + \ket{01} + \ket{10} \right)$ can be generated by  acting on $\ket{00}$ with the unitary
$
    U_0(\frac{\pi}{4},-\pi,\pi)\ \cnot_{1\to0} \ U_0(\frac{3\pi}{4},-2\pi,\frac{-\pi}{2})\ U_1(\gamma,0,0) \, ,
$ where $\gamma = 2 \cos^{-1} \left ( \frac{{\csc{\frac{\pi}{8}}}\sec{\frac{\pi}{8}}}{2 \sqrt{3}} \right) = 1.23096$.
The full construction is shown in Figure~\ref{fig:d3_superposition}. More generally, a uniform superposition of $n$ computational basis states can be generated with a circuit depth of $O(\log_2(n))$ \cite{shukla2024}.

Let us now construct the QFT for the dihedral group according to Eq.~\eqref{eq:group-QFT}. Observe that $\operatorname{QFT}_{G}$ always acts on the \textit{ancillary} register, which means that what matters is the mapping between the ancilla representation and the irreps of the group, \textit{not} the representation on the logical register. For this example, we label the Fourier basis $\{ \ket{\lambda,i,j}\}$ using the convention that $\lambda=1$ for the trivial representation, $-1$ for the sign representation, and $2$ for the two-dimensional standard representation. This convention denotes the output basis for the operator, where the input basis shall be the group elements in order of their computational basis state.

The Fourier coefficient of the state $\ket{\lambda,i,j}$ is given by
\begin{equation}
    c_{\lambda,i,j} (g) \coloneqq \frac{\sqrt{d_\lambda}}{\sqrt{|G|}} \bra{i}\rho_\lambda(g)\ket{j} = \frac{\sqrt{d_\lambda}}{\sqrt{|G|}}[\rho_\lambda(g)]_{i,j}\, ,
\end{equation}
where the matrix element on the right-hand side gives the $i,j$-th entry of the \textit{normalized} irreducible representation $\rho_\lambda$. For the trivial representation, $\rho_1(g)$ is $1$ for all group elements. For the sign representation, $\rho_{-1}(g) =\operatorname{sgn}(g)$. However, there is some ambiguity in the matrix representation of the dihedral group; we resolve this by restricting to unitary representations, as otherwise the resultant $\operatorname{QFT}_{D_3}$ will not be unitary. As such, we use the complex-valued representation given by
\begin{align}
    \rho_{2}(e) &= \begin{pmatrix}
        1 & 0 \\
        0 & 1
    \end{pmatrix}\, ,  \\
    \rho_{2} (f) &= \begin{pmatrix}
        0 & 1 \\
        1 & 0 
    \end{pmatrix} \, ,\\
    \rho_{2}(r) &= \begin{pmatrix}
        e^{2\pi i/3} & 0 \\
        0 & e^{-2\pi i/3}
    \end{pmatrix} \, ,\\
    \rho_{2}(rf) &= \begin{pmatrix}
        0 & e^{-2\pi i/3} \\
        e^{2\pi i/3} & 0
    \end{pmatrix} \, ,\\
    \rho_{2}(r^2) &= \begin{pmatrix}
        e^{-2\pi i/3} & 0 \\
        0 & e^{2\pi i/3}
    \end{pmatrix} \, ,\\
    \rho_{2}(rf) &= \begin{pmatrix}
        0 & e^{2\pi i/3} \\
        e^{-2\pi i/3} & 0
    \end{pmatrix} \, .
\end{align}

\begin{figure}[t!]
    \centering
    \includegraphics[width=\columnwidth]{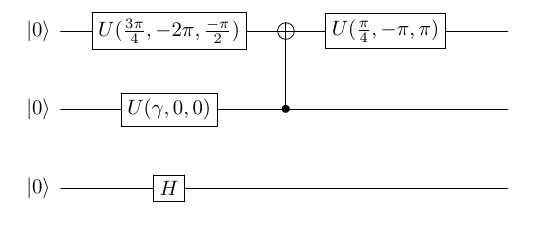}
    \caption{State creation circuit used to generate $\ket{+_{D_3}}$ from single-qubit unitaries of the form \eqref{eq:gen-unitary} and a single CNOT gate. In the circuit, $\gamma = 2 \cos^{-1} \left ( \frac{{\csc{\frac{\pi}{8}}}\sec{\frac{\pi}{8}}}{2 \sqrt{3}} \right) = 1.23096$. }
    \label{fig:d3_superposition}
\end{figure}

Now all that is left is to define the operator itself. Note that the output basis of the $\operatorname{QFT}_{D_3}$ is given by $\{\ket{1,1,1},\ket{-1,1,1},\ket{2,1,1},\ket{2,1,2},\ket{2,2,1},\ket{2,2,2}\}$, in that order. This is a 6-dimensional basis, just as there are six elements in the group; however, to force the operation to be unitary on qubits, we pad this basis with two auxiliary orthonormal vectors to allow it to act on a three-qubit register. Since the states $\ket{110}$ and $\ket{111}$ do not occur in $\ket{+_{D_3}}$, we are safe to let these two states map to the auxiliary vectors without concern for how the calculation may be affected.

The $\operatorname{QFT}_{D_3}$ can then be written as
\begin{align*}
    \left(
\begin{smallmatrix}
 c_{1,1,1}(e) &  c_{1,1,1}(f) & c_{1,1,1}(r) & c_{1,1,1}(rf) & c_{1,1,1}(r^2) & c_{1,1,1}(r^2f) & 0 & 0 \\
 c_{\text{-}1,1,1}(e) & c_{\text{-}1,1,1}(f) & c_{\text{-}1,1,1}(r) & c_{\text{-}1,1,1}(rf) & c_{\text{-}1,1,1}(r^2) & c_{\text{-}1,1,1}(r^2f) & 0 & 0 \\
 c_{2,1,1}(e) & c_{2,1,1}(f)  & c_{2,1,1}(r) & c_{2,1,1}(rf)  & c_{2,1,1}(r^2)  & c_{2,1,1}(r^2f)  & 0 & 0 \\
 c_{2,1,2}(e)  & c_{2,1,2}(f)  & c_{2,1,2}(r) & c_{2,1,2}(rf) & c_{2,1,2}(r^2) & c_{2,1,2}(r^2f)  & 0 & 0 \\
c_{2,2,1}(e)  & c_{2,2,1}(f)  & c_{2,2,1}(r) & c_{2,2,1}(rf) & c_{2,2,1}(r^2) & c_{2,2,1}(r^2f)& 0 & 0 \\
c_{2,2,2}(e)  & c_{2,2,2}(f)  & c_{2,2,2}(r) & c_{2,2,2}(rf) & c_{2,2,2}(r^2) & c_{2,2,2} (r^2f) & 0 & 0 \\
 0 & 0 & 0 & 0 & 0 & 0 & 1 & 0 \\
 0 & 0 & 0 & 0 & 0 & 0 & 0 &  1 \\
\end{smallmatrix}
\right) \, ,
\end{align*}
or, explicitly,
\begin{equation}\label{eq:qftd3-numbers}
    QFT_{D_3} =\left(
\begin{smallmatrix}
 \frac{1}{\sqrt{6}} & \frac{1}{\sqrt{6}} & \frac{1}{\sqrt{6}} & \frac{1}{\sqrt{6}} & \frac{1}{\sqrt{6}} & \frac{1}{\sqrt{6}} & 0 & 0 \\
 \frac{1}{\sqrt{6}} & -\frac{1}{\sqrt{6}} & \frac{1}{\sqrt{6}} & -\frac{1}{\sqrt{6}} & \frac{1}{\sqrt{6}} & -\frac{1}{\sqrt{6}} & 0 & 0 \\
 \frac{1}{\sqrt{3}} & 0 & \frac{e^{\frac{2 i \pi }{3}}}{\sqrt{3}} & 0 & \frac{e^{-\frac{2 i \pi }{3}}}{\sqrt{3}} & 0 & 0 & 0 \\
 0 & \frac{1}{\sqrt{3}} & 0 & \frac{e^{-\frac{2 i \pi }{3}}}{\sqrt{3}} & 0 & \frac{e^{\frac{2 i \pi }{3}}}{\sqrt{3}} & 0 & 0 \\
 0 & \frac{1}{\sqrt{3}} & 0 & \frac{e^{\frac{2 i \pi }{3}}}{\sqrt{3}} & 0 & \frac{e^{-\frac{2 i \pi }{3}}}{\sqrt{3}} & 0 & 0 \\
 \frac{1}{\sqrt{3}} & 0 & \frac{e^{-\frac{2 i \pi }{3}}}{\sqrt{3}} & 0 & \frac{e^{\frac{2 i \pi }{3}}}{\sqrt{3}} & 0 & 0 & 0 \\
 0 & 0 & 0 & 0 & 0 & 0 & 1 & 0 \\
 0 & 0 & 0 & 0 & 0 & 0 & 0 & 1 \\
\end{smallmatrix}
\right) \, .
\end{equation}
which is a unitary matrix as desired. Therefore, Eq.~\eqref{eq:qftd3-numbers} can be implemented on a quantum computer via a decomposition into elementary gates \cite{dawson2005solovaykitaevalgorithm,Barenco1995elementary}. \footnote{In practice, modern libraries such as Qiskit or Pennylane can be used to specify a custom unitary gate which is then transpiled into hardware native gates by the software.}

\begin{figure}[h!]
    \centering
    \includegraphics[width=\columnwidth]{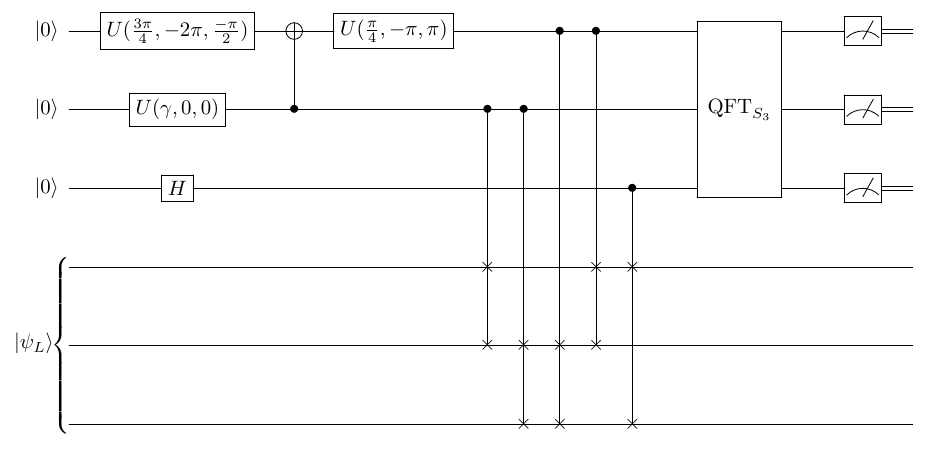}
    \caption{Full $D_3$ (equivalently, $S_3$) circuit according to the error detection method. $\ket{\psi}_L$ is the input logical state. The value of $\gamma$ is given by $\gamma = 2 \cos^{-1} \left ( \frac{{\csc{\frac{\pi}{8}}}\sec{\frac{\pi}{8}}}{2 \sqrt{3}} \right) = 1.23096$. }
    \label{fig:d3_full_circuit}
\end{figure}

Figure~\ref{fig:d3_full_circuit} depicts the full circuit for this example. As previously stated, the form of the group QFT does not depend on the representation on the data register. For example, the representation where $f=\cnot$ and $r= \cnot \swapop$ could result in the same $\operatorname{QFT}_{D_3}$ if the same ancilla mapping is used. 

However, we can easily formulate an example that would give rise to a different implementation. Consider the group element $r^2 =(1\ 2\ 3)^2 = (1\ 3\ 2) $. In the above example, we implemented this element independently, controlled on the ancilla state $\ket{10}_{01}\otimes\ket{+}_2$. Instead, now choose to implement a second cyclic shift of $(1\ 2\ 3)$ alongside the first shift. The ancilla state controlling such an operation would then be $\ket{11}_{01}\otimes\ket{+}_2$. This would permute the rows of $\operatorname{QFT}_{D_3}$ to be
\begin{align*}
 \left( 
\begin{smallmatrix}
 c_{1,1,1}(e) &  c_{1,1,1}(f) & c_{1,1,1}(r) & c_{1,1,1}(rf) & 0 & 0 & c_{1,1,1}(r^2) & c_{1,1,1}(r^2f)  \\
 c_{\text{-1},1,1}(e) & c_{\text{-1},1,1}(f) & c_{\text{-1},1,1}(r) & c_{\text{-1},1,1}(rf) & 0 & 0& c_{\text{-1},1,1}(r^2) & c_{\text{-1},1,1}(r^2f)  \\
 c_{2,1,1}(e) & c_{2,1,1}(f)  & c_{2,1,1}(r) & c_{2,1,1}(rf)   & 0 & 0 & c_{2,1,1}(r^2)  & c_{2,1,1}(r^2f)   \\
 c_{2,1,2}(e)  & c_{2,1,2}(f)  & c_{2,1,2}(r) & c_{2,1,2}(rf) & 0 & 0 & c_{2,1,2}(r^2) & c_{2,1,2}(r^2f)  \\
 0 & 0 & 0 & 0 & 1 & 0 & 0 & 0 \\
 0 & 0 & 0 & 0 & 0 & 1 & 0 & 0 \\
c_{2,2,1}(e)  & c_{2,2,1}(f)  & c_{2,2,1}(r) & c_{2,2,1}(rf) & 0 & 0 & c_{2,2,1}(r^2) & c_{2,2,1}(r^2f) \\
c_{2,2,2}(e)  & c_{2,2,2}(f)  & c_{2,2,2}(r) & c_{2,2,2}(rf) & 0 & 0 & c_{2,2,2}(r^2) & c_{2,2,2} (r^2f)  \\ 
\end{smallmatrix}
\right) ,
\end{align*}
which reflects both that the unused ancilla states are now $\ket{100}$ and $\ket{101}$ and guarantees that the group elements $\ket{r^2}$ and $\ket{r^2f}$ will be mapped to the correct ancillary state. 

\subsection{The $D_{2^{m}}$ QFT}\label{sec:qft-dn}
There is an efficient implementation of the dihedral QFT that takes advantage of the simple semidirect product structure of the dihedral group $D_n$. Indeed, to generate the QFT over the irreps for an arbritrary dihedral group, we borrow the approach from \cite{Alam2022dihedral} which itself builds upon the metacyclic group QFT given in \cite{hoyer1997}. For the sake of completeness, we recount the construction here.

A key insight of this approach relies on a property of the subgroup $\mathbb{Z}_n$ present in the dihedral group $D_n$. In general, if there exists a subgroup $H \subset G$ and a corresponding set of elements $\{g_i\}_{i=1}^m \in G$ such that 
\begin{equation} \label{eq:transversal}
    G = \cup_{i=1}^m g_i H \, ,
\end{equation}
then there is said to exist a left transversal of $H$ in $G$. It is readily apparent that such a left transversal of $\mathbb{Z}_n$ exists in $D_n$ as every group element can be written as $g = f^{\alpha} r^{k}$ with $\alpha \in \mathbb{Z}_2$ and $k \in \mathbb{Z}_n$. This form is reflected in the structure of the ancilla register in the previous section's example, as each ket can be written
\begin{equation}
    \ket{g} = \ket{\alpha} \ket{k} \, ,
\end{equation}
by simply permuting the final register to be the first, without loss of generality. 

This property can be used to find the Fourier transform of a function $f$ over $G$, denoted $F_G$, via its Fourier transform over $H$, denoted $F_H$. Indeed, given the property in \eqref{eq:transversal},
\begin{align}
    \sum_{g \in G}f(g) \rho(g) &= \sum_{i=1}^m\sum_{h \in H} f(g_i h) \rho(g_i h) \, , \\
    &= \sum_{i=1}^m \rho(g_i) \sum_{h \in H} f_i(h) \rho(h) \, ,\\
    &= \sum_{i=1}^m \rho(g_i) \tilde{f}_i(\rho|_H) \, ,
\end{align}
where $f_i(h) \coloneq f(g_i h)$. The above derivation is based on the definition of an $H$-adapted representation provided in \cite[Section~6]{hoyer1997}. 

We can now construct the $\operatorname{QFT}_{D_n}$ of a quantum state via the following procedure
\begin{align}\label{eq:FourierChain}
    \ket{\psi} &= \sum_{g \in G} c(g) \ket{g} \, ,\\
    &= \sum_{\alpha =0}^1\sum_{k \in \mathbb{Z}_n} c(\alpha k)\ket{\alpha}\ket{k} \, , \\
    &\xrightarrow{F_H} \sum_{\alpha =0}^1\ket{\alpha} \sum_{\tilde{k} \in \tilde{\mathbb{Z}_n}} c(\alpha \tilde{k})\ket{\tilde{k}} \, ,\\
    &\xrightarrow{U} \sum_{\tilde{g} \in \tilde{G}} c(\tilde{g}) \ket{\tilde{g}} = \ket{\tilde{\psi}} \, .
\end{align}
In the final step, $U$ is a change of basis unitary defined such that the following diagram commutes 
\begin{center}
        \begin{tikzcd}
        g_i h \arrow{r}{\phi} \arrow{d}{F_H} & g \arrow{d}{F_G} \\
        g_i \tilde{h}  \arrow{r}{U} & \tilde{g}
    \end{tikzcd}
\end{center}
or, equivalently, 
\begin{equation}\label{eq:QFTviasubgroup}
    F_G = U \circ \left ( \mathbb{I}\otimes F_H\right ) \circ \phi^{-1}   \, ,
\end{equation}
where $\phi$ is the bijection matching each $g$ with its unique decomposition $g_ih$. 

For $D_n$ with $n=2^m$, the full circuit is shown in Figure~\ref{fig:dn_qft}, and we now verify that it produces the QFT over the dihedral group.

\begin{figure}[h!]
    \centering
    \includegraphics[width=\columnwidth]{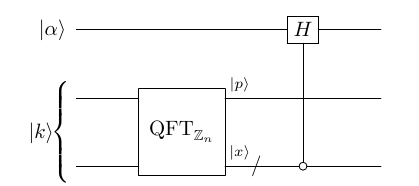}
    \caption{The $\operatorname{QFT}_{D_n}$ circuit where $\ket{k}$ designates the rotational register and $\ket{\alpha}$ denotes the flip register.}
    \label{fig:dn_qft}
\end{figure}

The algorithm begins with the archetypal QFT over $\mathbb{Z}_n$, performing the transformation %unify $\tilde{k}$ versus j notation
\begin{align}
    \ket{\alpha}\ket{k}&\mapsto\frac{1}{\sqrt{n}}\sum_{j=0}^{n-1}\omega^{-jk}\ket{\alpha}\ket{j}\\
    &=\frac{1}{\sqrt{n}}\sum_{p=0}^1\sum_{x=0}^{\frac{n}{2}-1}\omega^{-(p\frac{n}{2}+x)k}\ket{\alpha}\ket{p}\ket{x}.
\end{align}
This is equivalent to
\begin{align}
    \frac{1}{\sqrt{n}}\bigg[\sum_{p=0}^1(-1)^{pk}\ket{\alpha}\ket{p}\ket{0}
    +\sum_{p=0}^1\sum_{x=1}^{\frac{n}{2}-1}\omega^{-(p\frac{n}{2}+x)k}\ket{\alpha}\ket{p}\ket{x}\bigg],
\end{align}
where we have isolated the $x=0$ part of the sum and noted that $\omega^{pk\frac{n}{2}}=(-1)^{pk}$. Finally, a QFT over $\mathbb{Z}_2$ is implemented on the $\ket{\alpha}$ register, controlled off of the all-zeros $\ket{x}$ bitstring. Conveniently, $\operatorname{QFT}_{\mathbb{Z}_2}$ is exactly a single qubit Hadamard gate. This produces the state
\begin{equation}
\begin{aligned}
    \frac{1}{\sqrt{n}}\bigg[\frac{1}{\sqrt{2}}\sum_{p=0}^1(-1)^{pk}&(\ket{0}+(-1)^\alpha\ket{1})\ket{p}\ket{0}\\
    &+\sum_{p=0}^1\sum_{x=1}^{\frac{n}{2}-1}\omega^{-(p\frac{n}{2}+x)k}\ket{\alpha}\ket{p}\ket{x}\bigg].
\end{aligned}
\end{equation}
Pulling out the factor of $\frac{1}{\sqrt{2}}$ and writing out the components of the sum over $p$, we see that this state is equivalent to
\begin{equation}
\begin{aligned}
    \frac{1}{\sqrt{2n}}\bigg[&\ket{0}\ket{0}\ket{0}+(-1)^\alpha\ket{1}\ket{0}\ket{0}\\
    &(-1)^k\ket{0}\ket{1}\ket{0}+(-1)^k(-1)^\alpha\ket{1}\ket{1}\ket{0}\\
    &+\sqrt{2}\sum_{p=0}^1\sum_{x=1}^{\frac{n}{2}-1}\omega^{-(p\frac{n}{2}+x)k}\ket{\alpha}\ket{p}\ket{x}\bigg].
\end{aligned}
\end{equation}
From here, the irreps of $D_n$ are clearly present. Since $n$ is even, there are four 1-dimensional irreps given by $1$, $(-1)^\alpha$, $(-1)^k$, and $(-1)^\alpha(-1)^k$, which show up as the first four coefficients. 

There are $\frac{n}{2}-1$ remaining 2-dimensional irreps, and they appear in the final sum. To see this, first recall that for $\alpha=0$, the representation is diagonal with elements $\omega^{\lambda k}$ and $\omega^{-\lambda k}$, where $\lambda$ indexes the irrep. Similarly, if $\alpha=1$, the representation is antidiagonal with the same elements. When $p=0$, the coefficient in the sum is $\omega^{-kx}$ and when $p=1$, the coefficient is $\omega^{-(\frac{n}{2}+x)k}$. If we substitute $x\mapsto\frac{n}{2}-x$, then this coefficient becomes $\omega^{kx}$. Thus, the matrix elements of the $x$-th representation appear as the coefficients of $\ket{\alpha}\ket{0}\ket{x}$ and $\ket{\alpha}\ket{1}\ket{\frac{n}{2}-x}$. We therefore need to relabel the basis to get the sum in the form shown in \eqref{eq:group-QFT}. We accomplish this by writing $\ket{\lambda,i,j}:=\ket{i\oplus j}\ket{j}\ket{j\frac{n}{2}+(-1)^j\lambda}$.

\subsection{Nonabelian Codes with Constant Depth QFT}

A central structural advantage of the stabilizer-code formalism is that the QFT over $\mathbb{Z}_2^{\oplus n}$ required for syndrome extraction is a constant-depth unitary. Indeed, it is simply the tensor product of Hadamard gates,
\begin{equation}
    \mathrm{QFT}_{\mathbb{Z}_2^{\oplus n}}
    = H^{\otimes n}.
\end{equation}
For nonabelian groups, this feature generally fails. As demonstrated in Section~\ref{sec:qft-dn}, the QFT over the dihedral group $D_n$ has logarithmic circuit depth in the group index~$n$. Since $n$ is determined by the number of qubits, this translates into a linear depth in the number of physical qubits when the logical subspace is encoded using the representation constructed in Section~\ref{sec:one-logical}. This prompts the natural question:

\begin{quote}
\emph{Are there nonabelian codes whose syndrome-extraction procedure still
admits a constant-depth QFT, analogous to the stabilizer case?}
\end{quote}

A simple but powerful construction is obtained by embedding the desired nonabelian symmetry into a product group for which the QFT factors. Specifically, let
\begin{equation}
    G \;=\; K \times \mathbb{Z}_2^{\oplus n},
\end{equation}
where $K$ is any fixed nonabelian finite group.  The irreducible representations of the product group are tensor products of the irreps of the factors, and the QFT respects this decomposition:
\begin{equation}
    \mathrm{QFT}_{K \times \mathbb{Z}_2^{\oplus n}}
    \;=\;
    \mathrm{QFT}_K \,\otimes\, \mathrm{QFT}_{\mathbb{Z}_2^{\oplus n}}.
\end{equation}
Since $K$ is fixed, the circuit implementing $\mathrm{QFT}_K$ has constant depth, depending only on the choice of $K$.  Meanwhile, $\mathrm{QFT}_{\mathbb{Z}_2^{\oplus n}} = H^{\otimes n}$ is also constant depth. Hence,
\begin{equation}
    \mathrm{depth}(\mathrm{QFT}_{K \times \mathbb{Z}_2^{\oplus n}})
    \;=\;
    O(1),
\end{equation}
independent of the code size.

The corresponding code space is the $G$-symmetric subspace $\mathrm{Sym}_G$ of the representation $W=W_1 \otimes W_2$ of $G=K\times \mathbb{Z}_2^{\oplus n}$, where $W_1$ and $W_2$ are chosen representations of $K$ and $\mathbb{Z}_2^{\oplus n}$, respectively. Note that if $\rho$ is an irrep of $K$ and $\sigma$ is an irrep of $\mathbb{Z}_2^{\oplus n}$, then $\rho\otimes\sigma$ is an irrep of $K\times\mathbb{Z}_2^{\oplus n}$. This fact, along with the resulting fact that the character splits into a product $\chi_{\rho\otimes\sigma}(k,\Vec{n})=\chi_\rho(k)\chi_\sigma(\Vec{n})$, can be used to show that the projection onto an isotypic component splits as a tensor product. Indeed, if the irreps of $K$ are labeled by $\lambda$ and the irreps of $\mathbb{Z}_2^{\oplus n}$ are labeled by $\mu$, then
\begin{equation}
    \Pi^{K\times\mathbb{Z}_2^{\oplus n}}_{\lambda,\mu}=\Pi_{\lambda}^K \otimes \Pi_{\mu}^{\mathbb{Z}_2^{\oplus n}},
\end{equation}
where $\Pi_{\lambda}^K$ projects onto the $\lambda$-isotypic component of the $K$-representation $W_1$ and $\Pi_{\mu}^{\mathbb{Z}_2^{\oplus n}}$ is the Pauli-type projector arising from the abelian factor with representation $W_2$. The group in the superscript has been added to help distinguish between the projectors.

This construction therefore yields a broad class of nonabelian symmetry-protected codes with stabilizer-like measurement complexity: nontrivial representation-theoretic structure is introduced in the $K$ sector, while the $\mathbb{Z}_2^{\oplus n}$ factor guarantees constant-depth syndrome extraction. Thus, although general nonabelian QFTs require growing circuit depth, direct-product constructions show that nonabelian symmetry is compatible with constant-depth QFT-based syndrome extraction. Any fixed nonabelian group may be incorporated into a scalable code in this manner, allowing one to preserve the fault-tolerant advantages of stabilizer codes while extending the symmetry structure beyond the abelian setting.

\section{Conclusion}\label{sec:conclusion}
We have presented a unifying representation-theoretic framework for quantum error correction in which the code space is defined as the symmetric subspace of a finite group representation. Within this setting, stabilizer codes arise as a special case corresponding to abelian subgroups of the Pauli group, and their standard syndrome extraction procedure is reinterpreted as isotypic projection onto irreducible components. Our construction generalizes naturally to qudit stabilizer codes and to codes built from nonabelian group symmetries, thereby broadening the scope of symmetry-protected quantum error correction.

The primary advantage of this framework is its flexibility; the same formalism can accommodate abelian and nonabelian structures, discrete and mixed-dimension qudits, and even symmetry actions not naturally described by Pauli operators. Moreover, the isotypic syndrome extraction procedure discretizes errors in a manner analogous to stabilizer codes but with respect to more general symmetry groups. In this sense, stabilizer codes are not isolated phenomena but rather the simplest instance of a more general symmetry-aware principle.

There remain several challenges and open directions. Implementing efficient QFTs for large or nonabelian groups may be resource intensive, and a deeper study of fault-tolerant realizations within this framework is warranted. The classification of logical operations is also more subtle than in the stabilizer formalism, particularly due to the structure of the normalizer and its action on isotypic projectors. Furthermore, while we have highlighted conceptual connections to subsystem codes and decoherence-free subspaces, the relationship between these approaches deserves further exploration.

Looking forward, it will be valuable to investigate the performance of symmetry-protected codes under realistic hardware noise models, benchmark their threshold properties against leading architectures such as the surface code, and explore approximate group Fourier transforms to reduce overhead. Beyond fault tolerance, the present framework suggests potential applications in quantum simulation, where group symmetries play a central role, and in tailoring codes to the natural noise of specific physical platforms. We view these directions as promising steps toward developing a comprehensive, symmetry-resolved theory of quantum error correction.
% \begin{acknowledgments}
% The authors thank Ada H. Bradshaw for her expert advice and keyboard smashes.
% \end{acknowledgments}

\bibliographystyle{apsrev4-2}
\bibliography{refs}

\end{document}